\theoremstyle{plain}
\newtheorem{theorem}{Theorem}[section]
\newtheorem{lemma}[theorem]{Lemma}
\newtheorem{corollary}[theorem]{Corollary}
\theoremstyle{definition}
\newtheorem{observation}[theorem]{Observation}
\DeclareMathOperator*{\argmin}{arg\,min}
\DeclareMathOperator*{\poly}{poly}
\newcommand{\mininfprob}{\textsc{MinInfEdge}}
\newcommand{\mininfprobnode}{\textsc{MinInfNode}}
\newcommand{\mininfprobCL}{\textsc{MinInf-CL}}
\newcommand{\mininfprobnodeCL}{\textsc{MinInfNode-CL}}
\begin{document}

\title{Controlling Epidemic Spread using Probabilistic Diffusion Models on Networks}

%


\author{Amy Babay\thanks{University of Pittsburgh.
Email: \href{mailto:babay@pitt.edu}{babay@pitt.edu}} \and Michael Dinitz\thanks{Johns Hopkins University. 
Email: \href{mailto:mdinitz@cs.jhu.edu}{mdinitz@cs.jhu.edu}} \and Aravind Srinivasan\thanks{University of Maryland, College Park. 
Email: \href{mailto:srin@cs.umd.edu}{srin@cs.umd.edu}} \and  Leonidas Tsepenekas \thanks{University of Maryland, College Park.  
Email: \href{mailto:ltsepene@umd.edu}{ltsepene@umd.edu}}
\and Anil Vullikanti \thanks{University of Virginia.  
Email: \href{mailto:vsakumar@virginia.edu}{vsakumar@virginia.edu}}
}

\date{}

\maketitle

\begin{abstract}
The spread of an epidemic is often modeled by an SIR random process on a social network graph. The \mininfprob{} problem for optimal social distancing involves minimizing the expected number of infections, when we are allowed to break at most $B$ edges; similarly the \mininfprobnode{} problem involves removing at most $B$ vertices. These are fundamental problems in epidemiology and network science. While a number of heuristics have been considered,  the complexity of these problems remains generally open. In this paper, we present two bicriteria approximation algorithms for \mininfprob{}, which give the first non-trivial approximations for this problem. The first is based on the cut sparsification result of Karger \cite{karger:mathor99}, and works when the transmission probabilities are not too small. The second is a \emph{Sample Average Approximation (SAA)} based algorithm, which we analyze for the Chung-Lu random graph model. We also extend some of our results to tackle the \mininfprobnode{} problem.
\end{abstract}

\section{Introduction}\label{sec:intro}
With the COVID-19 pandemic and future such pandemics in mind, \emph{computational epidemiology}, powered by AI and efficient algorithms, has emerged as a vital discipline. There are two major sources of uncertainty in typical applications of computational epidemiology: how the disease will unfold probabilistically (we may have a good model for this, but have limited control over such stochasticity), and models for contact between members of a population (social-contact networks). In this work, we take a rigorous stochastic-optimization approach to develop provably-good approximation algorithms for budgeted epidemic control under such sources of uncertainty.

The most widespread tool for modeling the spread of an epidemic in a social-contact network $G=(V,E)$ are SIR random processes \cite{pastor2015epidemic,marathe:cacm13}. According to those, the infection starts at a given set $S$ of vertices, where without loss of generality we can assume that these nodes are merged into a single infectious vertex $s$ (this is formally explained in Section \ref{sec:def}). Afterwards, if any node $u \in V$ gets infected, it spreads the disease independently to each ``healthy'' neighbor $v \in V$ of it with probability $p_{u,v}$---also denoted $p_e$---where $e = (u,v) \in E$ is an edge of the given contact network.

In order to control and mitigate the spread of the disease, there are two primary interventions studied in the literature. The first involves \emph{social distancing}, and is modeled as removing a subset $F \subseteq E$ of edges from the graph. The second corresponds to \emph{vaccination}, and is modeled by removing a set $V'\subseteq V$ of nodes. 
There is a significant cost associated with removing nodes and edges, and this motivates the problems we study in this paper. In the \mininfprob{} problem, the goal is to choose a set $F$ of edges for social distancing, so that the cost of $F$ is at most some budget $B$, and the expected number of infections is the minimum possible; similarly, \mininfprobnode{} involves removing a subset of at most $B$ nodes to minimize the expected number of infections.

Despite the significant importance of the \mininfprob{} and \mininfprobnode{} problems, they both remain quite open. A number of heuristics have been proposed, which choose edges or nodes based on local structural properties, e.g., degree, centrality and eigenvector components. However, these do not give any guarantees in general, except in very special random graph models, e.g.,~\cite{bollobas+errorattack04}. The only prior work on \mininfprob{} with rigorous guarantees is for the case of deterministic graphs, where we assume $p_{u,v}=1$ for all $(u,v) \in E$~\cite{hayrapetyan2005,DBLP:conf/soda/EubankKMSW04,ekmsw-2006,svitkina2004}. This scenario actually models a highly-contagious disease, and can be viewed as the SI model~\cite{marathe:cacm13}. In this paper, we obtain the first rigorous results for both \mininfprob{} and \mininfprobnode{}.


\subsection{Formal Problem Definition}\label{sec:def}
Suppose we have an undirected graph $G=(V,E)$ with edge weights $c_e \geq 0$ for every edge $e \in E$ (representing the cost of removing the social connection $e$). Let $\Delta$ denote the maximum degree of any node in $G$, and $n= |V|$, $m=|E|$.
  
We assume an SIR model of disease spread, in which each node is in one of the states S (susceptible), I (infectious) or R (recovered). We also assume the infection starts at
a subset $I_0 \subseteq V$.
An infectious vertex $v$ infects each susceptible neighbor $u\in N(v)$ once, independently with probability $p_{u,v}\in[0, 1]$, where $N(v) = \{u \in V: (u,v)\in E\}$. This is equivalent to a percolation process~\cite{pastor2015epidemic, marathe:cacm13}: consider a random subgraph $G(\vec p) = (V,E(\vec p))$ obtained by retaining each edge $e \in E$ independently with probability $p_e$ (and thus removing each edge with probability $1-p_e$).
In particular, the probability that a set $V_{inf}$ of vertices is reachable from $I_0$ in $G(\vec p)$ is precisely equal to the probability that the set $V_{inf}$ becomes infected during the SIR process.
We will sometimes abuse notation and let $G(\vec p)$ also represent the distribution over subgraphs thus obtained. Without loss of generality, we can assume that $I_0$ consists of a single vertex $s$, since we can add a meta-vertex $s$ with edges to all vertices in $I_0$ with probability 1. Finally, some of our results assume a uniform probability setting in which $p_e=p$ for all $e\in E$; in this case we denote the random graph $G(\vec{p})$ by $G(p)$.

A social distancing strategy corresponds to the removal of a subset $F\subseteq E$ of edges; for such an $F$, we denote by $inf(V,E \setminus F,s)$ the number of vertices that are in the same connected component as $s$ in the \emph{residual} graph $G_F = (V,E \setminus F)$. For simplicity, we refer to $F$ as a cut or a cut-set, though it need not always induce a cut in the graph for our problems of interest. The expected number of infected vertices in the percolation process is then $\mathbb{E}_{G(\vec p)}[inf(V,E(\vec p) \setminus F,s)]$. 

For the vaccination intervention, $F\subseteq V$ is a set of nodes to be removed, and in this case, $G_F = (V, E-\{(u, v)\in E: u\in F\mbox{ or }v\in F\})$ is the subgraph obtained by removing edges incident to nodes in $F$; here $inf(V,E \setminus \{(u, v)\in E: u\in F\mbox{ or }v\in F\},s)$ will denote the number of infected vertices when the edges incident to vertices in $F$ are removed. For the expected number of infections in the percolation process we use $\mathbb{E}_{G(\vec p)}[inf(V,E(\vec p) \setminus \{(u, v)\in E(\vec p): u\in F\mbox{ or }v\in F\},s)]$.

\paragraph{The \mininfprob{} Problem:}
Besides the already-described input, we are given a budget $B$, and the goal is to choose a set $F \subseteq E$ of edges such that:

\smallskip \noindent 
(a) $c(F) = \sum_{e \in F}c_e \leq B$, i.e., the total cost of the set $F$ of edges to be removed, is at most $B$.

\smallskip \noindent 
(b) $\mathbb{E}_{G(\vec p)}[inf(V,E(\vec p) \setminus F,s)]$, i.e., the expected number of nodes reachable from $s$ when we remove the edges in $F$ and conduct the disease percolation on the remaining graph, is minimized. 

\paragraph{The \mininfprobnode{} Problem:}
Besides the already-described input, we are given a budget $B$, and the goal is to choose a set $F \subseteq V$ of vertices such that:

\smallskip \noindent 
(a) $|F| \leq B$, i.e., the total number of removed vertices is at most $B$.

\smallskip \noindent 
(b) $\mathbb{E}_{G(\vec p)}[inf(V,E(\vec p) \setminus \{(u, v)\in E(\vec p): u\in F\mbox{ or }v\in F\},s)]$ is minimized. \\

\noindent
\textbf{$(\alpha, \beta)$-approximation.} As in~\cite{hayrapetyan2005,DBLP:conf/soda/EubankKMSW04}, we will focus on bicriteria approximation algorithms. Here we define this notion only for \mininfprob{}, since the case of \mininfprobnode{} is almost identical. We say that a solution $F \subseteq E$ is an $(\alpha,\beta)$-approximation if $c(F) \leq\alpha B$, and $\mathbb{E}_{G(\vec p)}[inf(V,E(\vec p) \setminus F,s)]\leq \beta \cdot \mathbb{E}_{G(\vec p)}[inf(V,E(\vec p) \setminus F^*,s)]$, where $F^*$ is an optimal solution for the given instance.

\paragraph{Random Models for Networks:} It is well-recognized that with the ever-growing importance of networks and network science, we need good random-graph models for predictive applications, simulations, testing of new algorithms etc.: see, e.g., \cite{Barabasi509,bollobas+errorattack04}. 

In our context of social-contact networks, the random-graph model of Chung and Lu~\cite{DBLP:journals/siamdm/ChungL06} is particularly useful. In this model, we have a set of vertices $V$, and a weight $w_v$ for every node $v \in V$ that denotes its expected degree in the graph; let $w_{min}= \min_v w_v$ and $w_{max} = \max_v w_v$. The edges $E$ of the graph are determined via the following random process. For every $u,v \in V$, the probability of having the $(u, v)$ edge in $E$ is
\begin{align}
    q_{u,v} = \frac{w_u w_v}{\sum_{r\in V} w_r},   \notag
\end{align} 
where these edges are present independently and self-loops are allowed. A natural assumption here is $w_{min} = O(1)$. A common instantiation of this model is with a  power law, in which $n_i$, the number of nodes of weight $i$, satisfies $n_i = \Theta(n/i^{\beta})$, with $\beta > 2$ being a model parameter. In our paper, we are using the power law instantiation every time we consider this model.

The random graphs captured by the Chung-Lu model are more realistic than those of the simple Erd\H{o}s-Renyi model~\cite{DBLP:conf/soda/EubankKMSW04}. The reason for this is imposing a specified degree sequence that models the heavy-tailed nature of real-world degree distributions. 

We refer to \mininfprob{} and \mininfprobnode{} when the graph $G = (V,E)$ is from the Chung-Lu model as \mininfprobCL{} and \mininfprobnodeCL{}, respectively.
The random process for constructing the graph $G=(V,E)$ in the Chung-Lu model should not be confused with the percolation process occurring on $G$ during the spread of the disease.
In the case of \mininfprobCL{} and \mininfprobnodeCL{}, the reader can view the whole process as happening in two steps. At first, $G=(V,E)$ is chosen randomly according to the Chung-Lu model. Afterwards, the disease starts its diffusion in the chosen network according to the probability vector $\vec p$.

\subsection{Contributions and Outline}

We mostly focus on \mininfprob{} for the rest of the paper. In Appendix~\ref{sec:vacc} we discuss which of our results extend to the case of \mininfprobnode{}.

In Section~\ref{sec:karger}, we present an $(O(1), O(1))$-approximation for unit edge-cost \mininfprob{} (all edges of $G$ have cost $1$). This result is for the uniform $p$ probability setting, in the regime where Karger's cut sparsification result holds.
Let $\hat{G}$ be the weighted graph obtained by setting the weight $w_e$ of each edge $e$ equal to $p$, and let $\hat{c}$ denote the weight of the minimum cut in $\hat{G}$.
Karger's result (Theorem~\ref{thm:karger}) states that if
$\hat{c}\geq 9\ln{n}$, then the size of every cut in $G(p)$ is close to the corresponding cut in $\hat{G}$. In this case, we are able to reduce \mininfprob{} to a problem from~\cite{hayrapetyan2005}, using just one random sample $G(p)$.
However, even this setting is not trivial, and to the best of our knowledge, this is the first result when the transmission probability is not 1. 

In Section \ref{sec:SAA} we present a sampling framework for \mininfprob{} that utilizes the powerful sample-average-approximation (SAA) approach \cite{KleywegtSH01,RuszczynskiS03,Shapiro03,swamy2012sampling}. Specifically, we sample a polynomial number of graphs from $G(\vec p)$ and then formulate a linear program (LP) that describes the empirical estimate of the optimal solution of those samples. Afterwards, we solve this LP and provide a randomized-rounding procedure that transforms its fractional solution into an integral one. Let $F_0$ be the solution (set of edges to remove) that we compute, $OPT$ the value of the optimal solution, and $\Gamma$ the expected number of simple paths\footnote{``Paths" will refer throughout to \emph{simple} paths: ones in which no nodes or edges are repeated.} in a randomly drawn graph from $G(\vec p)$, where the randomness also includes the random choice of $G$, in case $G$ is drawn from a random-graph model.

\paragraph{Three different sources of randomness/uncertainty:} Our statements will refer to (combinations of) three distinct sources of randomness:
\begin{itemize}
    \item \textbf{Type 1:} This randomness is over the random choice, if any, of our network $G = (V,E)$ (such as randomness resulting from choosing $G$ according to the Chung-Lu model). If the network $G$ is deterministic, Type 1 is vacuous: there is no randomness. 
    \item \textbf{Type 2:} This randomness arises from the choices of our randomized rounding algorithm.
    \item \textbf{Type 3:} This type of randomness refers to the random percolation/diffusion of the disease, governed by $\vec{p}$.
\end{itemize}

Our main theorem for the SAA approach of Section \ref{sec:SAA} is summarized in the following, where ``$\log$'' denotes the natural logarithm throughout. 

\begin{theorem}
\label{thm:main}
For any chosen constants $\epsilon > 0$ and $\gamma > 1$, the following hold:
\begin{itemize}
\item with probability at least $1 - O(n^{-\gamma})$, where the randomness is solely of Type 2, we have $c(F_0) \leq O(\frac{\gamma}{\epsilon}) \log{n} \cdot B$; 
\item there exists an event $\mathcal{A}$ with $\Pr[\mathcal{A}] \geq 1-O(\frac{1}{n^2})-O\big(\frac{\Gamma \log n}{ \epsilon^2 n^{\gamma}}\big)$ and $\mathbb{E}[inf(V,E(\vec p) \setminus F_0,s) \bigm| \mathcal{A} ]\leq (1+\epsilon) \cdot OPT$. Here, randomness is with respect to  Type 1 (if applicable), Type 2, and Type 3.
\end{itemize}
\end{theorem}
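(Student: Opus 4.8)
The plan is to build a linear programming relaxation from the drawn samples, round it with a logarithmic up-scaling, and control the two criteria separately. First I would draw $N=\poly(n)$ independent percolation samples $G_1,\dots,G_N$ from $G(\vec p)$ and write a compact ``isolation'' LP: a cut variable $x_e\in[0,1]$ for every edge, together with, for each sample $i$, distance labels $d^{(i)}_v\ge 0$ with $d^{(i)}_s=0$ and $d^{(i)}_v\le d^{(i)}_u+x_e$ for every $e=(u,v)\in E_i$, and infection variables $y^{(i)}_v\ge 1-d^{(i)}_v$ (equivalently $y^{(i)}_v+\sum_{e\in P}x_e\ge 1$ for every simple $s$--$v$ path $P$ in $G_i$). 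The objective minimizes the empirical average $\frac1N\sum_i\sum_v y^{(i)}_v$ subject to $\sum_e c_e x_e\le B$. The integral assignment induced by the optimal cut $F^*$ is feasible, with objective exactly $\frac1N\sum_i inf(V,E_i\setminus F^*,s)$, so the LP optimum $L^*$ satisfies $L^*\le \frac1N\sum_i inf(V,E_i\setminus F^*,s)$.

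The second step is the two concentration arguments that populate the good event $\mathcal A$. Since $F^*$ is fixed, the values $inf(V,E_i\setminus F^*,s)$ are i.i.d.\ with mean $OPT$, so a Bernstein/Hoeffding bound over the $N$ samples gives $\frac1N\sum_i inf(V,E_i\setminus F^*,s)\le (1+\epsilon)\,OPT$ with probability $1-O(1/n^2)$ once $N=\Theta\!\big(\frac{\Gamma\log n}{\epsilon^2}\big)$ (using $OPT\ge 1$ and that the per-sample infection count is dominated by the at most $\Gamma$ present $s$-paths). This is the Type-3 contribution to $\mathcal A$ and yields $L^*\le (1+\epsilon)\,OPT$. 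For the budget I would round by setting $\hat x_e=\min\{1,\lambda x_e^*\}$ with $\lambda=\Theta\!\big(\frac{\gamma}{\epsilon}\log n\big)$ and placing each edge $e$ into $F_0$ independently with probability $\hat x_e$. Then $\mathbb E[c(F_0)]\le \lambda B$, and a Chernoff bound gives $c(F_0)\le O\!\big(\frac{\gamma}{\epsilon}\log n\big)\cdot B$ with probability $1-O(n^{-\gamma})$ over the Type-2 randomness, which is exactly the first bullet.

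The crux is the infection bound for the data-dependent $F_0$, evaluated on a fresh percolation. Fixing $F_0$, the Type-3 expected infection is $\sum_v\Pr_{\vec p}[v\text{ reachable in }(V,E(\vec p)\setminus F_0)]$, which by a union bound over simple paths is at most $1+\sum_{P\text{ from }s}\pi_P\,\mathbb 1[P\cap F_0=\emptyset]$, where $\pi_P=\prod_{e\in P}p_e$. Taking expectation over the rounding, each surviving-path probability is $\prod_{e\in P}(1-\hat x_e)\le e^{-\lambda\sum_{e\in P}x_e^*}$, so I would split paths at a constant threshold $\theta$ on the LP cut-value $\sum_{e\in P}x_e^*$. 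Paths with cut-value $\ge\theta$ contribute at most $e^{-\lambda\theta}\sum_P\pi_P\le \Gamma\cdot n^{-\Omega(\gamma/\epsilon)}$, which is negligible. Paths with cut-value $<\theta$ are the ``under-cut'' ones: by the LP constraints, if such a path to $v$ appears in any sample $i$ then $y^{(i)}_v>1-\theta$, so these are witnessed by fractional infection that $L^*$ already pays for. The remaining danger is an under-cut path that is likely in the fresh graph yet, by bad luck, never appeared among $G_1,\dots,G_N$; a path with $\pi_P\ge\tau$ is missed with probability $(1-\pi_P)^N\le e^{-N\tau}\le n^{-\gamma}$, and a union bound over the $\Gamma$ expected simple paths, together with the $\frac{\log n}{\epsilon^2}$ sampling slack, contributes the $O\!\big(\frac{\Gamma\log n}{\epsilon^2 n^\gamma}\big)$ failure term. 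Conditioned on $\mathcal A$ these three pieces combine to give expected infection at most $(1+\epsilon)L^*\le (1+\epsilon)^2\,OPT$, i.e.\ $(1+O(\epsilon))\,OPT$ after rescaling $\epsilon$.

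Finally I would assemble $\mathcal A$ as the intersection of the SAA-concentration event for $F^*$ (failure $O(1/n^2)$) and the path-coverage event (failure $O\!\big(\frac{\Gamma\log n}{\epsilon^2 n^\gamma}\big)$), so a union bound gives $\Pr[\mathcal A]\ge 1-O(1/n^2)-O\!\big(\frac{\Gamma\log n}{\epsilon^2 n^\gamma}\big)$, with the budget bullet following independently from the Type-2 Chernoff bound. The main obstacle is exactly the third step: standard SAA concentration controls a \emph{fixed} solution, but $F_0$ depends on the very samples used to build it, so a naive ``empirical $\approx$ true'' argument fails. The resolution is to bypass the empirical objective of $F_0$ and bound its true cost path-by-path, charging well-cut paths to the rounding exponent $e^{-\lambda\sum_{e\in P}x_e^*}$, under-cut paths to the LP objective $L^*$, and only the rare badly-covered paths to the $\Gamma$-indexed union bound, which is precisely where the expected path count $\Gamma$ enters the probability guarantee.
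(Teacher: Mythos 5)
Your setup — the sampled path-covering LP, the $\Theta(\frac{\gamma}{\epsilon}\log n)$ up-scaling of $x^*$, and the Chernoff bound for the budget bullet — matches the paper, and you correctly identify the crux: $F_0$ depends on the very samples used to build it. But your proposed resolution of that crux has a genuine gap, and it is not how the paper proceeds. The paper closes the loop entirely on the \emph{empirical} side and then transfers to true expectations by a \emph{uniform} concentration bound: it takes $N=\frac{3n}{\epsilon^2}\log\big(n^2\cdot 2^{m+1}\big)$ — note the $\log 2^{m+1}=\Theta(m)$ term, which your choice of $N$ lacks — precisely so that $h(G,F)\in[(1-\epsilon)\mathbb{E}[h(G,F)],(1+\epsilon)\mathbb{E}[h(G,F)]]$ holds simultaneously for \emph{all} $2^m$ subsets $F\subseteq E$ (Lemma~\ref{lemma:conc}), hence in particular for the sample-dependent $F_0$ and for $F^*$. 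Then the event $\mathcal{A}_1$, that the rounding cuts every path to every vertex with $y_{vj}\geq\epsilon$, yields the purely empirical, deterministic chain $h(G,F_0)\leq\frac{1}{1-\epsilon}\cdot\mathrm{LP}\leq\frac{1}{1-\epsilon}h(G,\hat F)\leq\frac{1}{1-\epsilon}h(G,F^*)$, and the uniform concentration converts this into $\mathbb{E}[h(G,F_0)]\leq\frac{1+\epsilon}{(1-\epsilon)^2}OPT$. Separately, your plan to set $N=\Theta\big(\frac{\Gamma\log n}{\epsilon^2}\big)$ is not viable: $\Gamma$ is unknown to the algorithm (it involves Type-1 randomness) and may be superpolynomial, whereas the theorem requires $N\leq\poly(n,m)$; in the paper $\Gamma$ enters only the failure probability, via Markov's inequality applied to the realized number of paths across the $N$ samples, combined with a union bound over the paths in $\mathcal{P}_{hit}$ surviving the rounding.

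Concretely, two steps of your path-by-path accounting do not go through. First, charging surviving under-cut paths to $L^*$: the LP pays roughly $(1-\theta)$ for a vertex $v$ in sample $i$ only when an under-cut path to $v$ is \emph{present in $G_i$}, so $L^*$ controls an empirical frequency, while the true cost involves $\Pr_{\vec p}[\exists\text{ present under-cut path to } v]$; equating the two requires concentration of these frequencies uniformly over the sample-dependent solution $x^*$ (which is what defines ``under-cut''), and your ``the path appears in at least one sample'' event is far weaker than frequency concentration — a path with $\pi_P\gg 1/N$ appearing once charges only $O(1/N)$ to $L^*$ but contributes $\pi_P$ to the true bound. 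Second, under-cut paths with $\pi_P<\tau$ that never appear among the samples are neither cut (no rounding-mass guarantee) nor charged to $L^*$, yet they contribute $\sum_P\pi_P$ to your union bound on the true infection; the only a priori control on that sum is $\Gamma$ itself, which is nowhere near $O(\epsilon\cdot OPT)$ since $OPT$ can be as small as $1$. (Your per-path union bound also overcounts vertices reachable by many paths — a looseness the paper avoids by making the per-vertex charge $\frac{1-y_{vj}}{1-\epsilon}$ on the empirical objective.) The missing idea is exactly the exponential union bound over all $2^m$ candidate solutions baked into $N$, which makes the naive ``empirical $\approx$ true'' step legitimate even for the data-dependent $F_0$ and renders the path-by-path charging unnecessary.
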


Observe now that if $\Gamma \leq  \poly(n)$\footnote{Throughout, ``poly" will denote an arbitrary univariate or bivariate polynomial.}, we can choose $\gamma$ to be large enough, such that $\Pr[\mathcal{A}] \geq 1-O(1/n^2)$. As we show in Section \ref{sec:SAA} this immediately implies the following corollary.

\begin{corollary}
When $\Gamma \leq  \poly(n)$, we have $\mathbb{E}[inf(V,E(\vec p) \setminus F_0,s)]\leq (1+O(\epsilon) + O(1/n)) OPT$, where the randomness is with respect to  Type 1 (if applicable), Type 2, and Type 3.
\end{corollary}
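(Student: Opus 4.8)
The plan is to combine the second bullet of Theorem~\ref{thm:main} with the law of total expectation, using the crude deterministic bound $inf(V,E(\vec p)\setminus F_0,s)\le n$ to control the contribution of the low-probability ``bad'' event $\overline{\mathcal{A}}$. First I would make the failure probability genuinely small, following the observation preceding the corollary. Since $\Gamma\le\poly(n)$ and both $\epsilon$ and $\gamma$ are constants, the term $O\big(\Gamma\log n/(\epsilon^2 n^{\gamma})\big)$ in the theorem's lower bound on $\Pr[\mathcal{A}]$ decays faster than any fixed negative power of $n$ once $\gamma$ is taken large enough (e.g. $\gamma\ge 3$ suffices when $\Gamma=O(n^c)$). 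Hence we may arrange $\Pr[\mathcal{A}]\ge 1-O(1/n^2)$, and correspondingly $\Pr[\overline{\mathcal{A}}]\le O(1/n^2)$.

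Next I would decompose the unconditional expectation, where the randomness is over Types 1, 2, and 3, as
\[
\mathbb{E}[inf(V,E(\vec p)\setminus F_0,s)] = \mathbb{E}[inf\mid\mathcal{A}]\,\Pr[\mathcal{A}] + \mathbb{E}[inf\mid\overline{\mathcal{A}}]\,\Pr[\overline{\mathcal{A}}].
\]
The first summand is bounded by $(1+\epsilon)\cdot OPT$ directly from the theorem, using $\Pr[\mathcal{A}]\le 1$. For the second summand, every realization satisfies $inf\le n$, so $\mathbb{E}[inf\mid\overline{\mathcal{A}}]\le n$; multiplying by $\Pr[\overline{\mathcal{A}}]\le O(1/n^2)$ gives a contribution of $O(1/n)$. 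Combining the two bounds yields $\mathbb{E}[inf(V,E(\vec p)\setminus F_0,s)]\le (1+\epsilon)\cdot OPT + O(1/n)$.

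Finally I would convert this additive slack into the multiplicative form of the corollary. The source $s$ always lies in its own component of the residual graph, so $inf\ge 1$ in every realization and for every feasible cut; in particular $OPT\ge 1$. Therefore $O(1/n)\le O(1/n)\cdot OPT$, and we obtain $\mathbb{E}[inf(V,E(\vec p)\setminus F_0,s)]\le \big(1+\epsilon+O(1/n)\big)\cdot OPT = \big(1+O(\epsilon)+O(1/n)\big)\cdot OPT$, as desired. The only genuinely delicate point is the interplay between the failure probability and the worst-case value of $inf$: it is essential that boosting the constant $\gamma$ drives $\Pr[\overline{\mathcal{A}}]$ below $n^{-2}$ rather than merely $n^{-1}$, since with a bound of $n$ on $inf$ over the bad event only an $n^{-2}$-type tail keeps that contribution from swamping $OP\!T$. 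Everything else is routine.
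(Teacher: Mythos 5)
Your proposal is correct and matches the paper's own proof essentially step for step: choose $\gamma$ large enough that the $O\big(\Gamma\log n/(\epsilon^2 n^{\gamma})\big)$ term is $O(1/n^2)$, split the expectation over $\mathcal{A}$ and $\overline{\mathcal{A}}$, bound the bad-event contribution by $n\cdot O(1/n^2)=O(1/n)$, and absorb the additive slack using $OPT\geq 1$ (since $s$ is always infected). Your closing remark about needing an $n^{-2}$-type tail to offset the worst-case bound of $n$ is exactly the consideration implicit in the paper's choice of failure probability.
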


Hence, in Section~\ref{sec:path-count} we prove that a family of Chung-Lu random-graphs satisfies the $\Gamma \leq  \poly(n)$ property (recall this model captures realistic social-contact networks well~\cite{DBLP:conf/soda/EubankKMSW04,ekmsw-2006}). Under this property, are main result informally says that \emph{we can approximate the budget to within a factor $O(\log n)$ with high probability, and the expected number of infected people to within a constant factor}. 

A remark regarding Section \ref{sec:SAA} is that our goal is to present a "proof of concept", so we do not optimize the constants in our algorithms, and we are content with polynomial running times. In particular, we do not spell out the actual running times of our algorithms: these will easily be seen to be bounded by polynomials of $n$ and $m$. We remark that most of the prior work on this problem has been experimental, and that our paper is the first to give rigorously-proven results.

Section~\ref{sec:path-count} develops the above-mentioned $\poly(n)$ bound on $\Gamma$ for a realistic Chung-Lu family of graphs. More generally, it shows a \emph{phase-transition} phenomenon for the expected number of paths of any length $k$, as a function of the model parameter $\beta$: this is proved to be at most $\poly(n,2^k)$ for $\beta > 3$, and to be at least  $(k!)^{\Omega(1)}$ for $\beta < 3$. This leads to our provably-good approximation algorithms for the Chung-Lu family of graph models when $\beta > 3$.
 
In Section~\ref{sec:saa-general}, we show a slightly different SAA approach combined with a deterministic rounding which gives an $(O(n^{2/3}), O(n^{2/3}))$-approximation for general graphs.
 


\subsection{Further Related Work}
There has been much work on heuristics for interventions for the SIR model~\cite{YANG2019115, EAMES200970, PhysRevLett.91.247901,Miller2007EffectiveVS,Barabasi509, sambaturu2020designing}. In particular, heuristics based on degree or centrality, e.g., \cite{PhysRevLett.91.247901,Miller2007EffectiveVS}, have been shown to be quite effective in many classes of networks (including random graphs), but these do not provide any guarantees. The work of \cite{sambaturu2020designing} explores the use of the sample average approximation method, but has worst-case approximation bounds as large as $O(n)$.

However, as mentioned earlier, rigorous results are only known for the setting where $p_{u,v}=1$ for all $(u,v) \in E $~\cite{bollobas+errorattack04,hayrapetyan2005,ekmsw-2006,svitkina2004}. The \mininfprob{} problem is known to be NP-hard even in this setting \cite{hayrapetyan2005, svitkina2004}, and constant factor bicriteria approximation algorithms are known.

We note that another related direction of work has been on reducing the first eigenvalue, referred to as the spectral radius, based on a characterization of the time to die out in SIS models (in which, unlike the SIR model, an infected node switches back to state S)  \cite{ganesh+topology05}. There has been much work on reducing the spectral radius, e.g.,
\cite{PreciadoVM13_2,PreciadoVM13,PreciadoVM14,SahaSDM15,Ogura2017}. However, these results do not imply any guaranteed bounds for \mininfprob{} or \mininfprobnode{}.

\section{\mininfprob{} with Unit Edge-Costs and Uniform Probabilities}
\label{sec:karger}

In this section we are going to consider a special case of \mininfprob{}. Specifically, we assume that the edge costs of the network $G=(V,E)$ are all $1$, i.e., $c_e = 1$ for all $e \in E$. Moreover, we will work under he uniform transimition probability setting.

For a random graph $G(p) = (V,E(p))$ and any $F \subseteq E$, let $F(p) = F \cap E(p)$ be the random cut corresponding to $F$ in $G(p)$. Let also $c_{min}$ be the size of the smalletst cut in $G$. We are going to use a cut sparisification result of Karger \cite{karger:mathor99}.

\begin{theorem}\label{thm:karger}
(\cite{karger:mathor99}) Let $\epsilon = \sqrt{ \frac{3(d+2)(\ln n)}{c_{min} \cdot p}}$ for some $d > 0$.  If $\epsilon \leq 1$ then, with probability at least $1 - O(1/n^d)$, we have $\big{|}|F(p)| - \mathbb{E}_{G(p)}[|F(p)|]\big{|} \leq \epsilon \mathbb{E}_{G(p)}[|F(p)|]$ for every $F \subseteq E$.
\end{theorem}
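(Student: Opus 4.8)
The plan is to combine a Chernoff bound for a single cut with Karger's cut-counting theorem, and then union-bound over all cuts. Throughout, I read $F$ as ranging over the cuts of $G$ (consistent with $F(p)$ being described as a ``random cut''): concentration cannot possibly hold for arbitrary edge subsets such as singletons, so it is the cut structure, and the scarcity of small cuts, that makes the statement true.

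First, I would fix a single cut $F$ of value $|F| = k$. Since $F(p) = F \cap E(p)$ retains each of the $k$ edges of $F$ independently with probability $p$, the random variable $|F(p)|$ is a sum of $k$ independent Bernoulli($p$) variables, so $\mathbb{E}_{G(p)}[|F(p)|] = pk$. Because the hypothesis guarantees $\epsilon \le 1$, I can apply the two-sided multiplicative Chernoff bound to get
$$\Pr\big[\, \big| |F(p)| - pk \big| > \epsilon\, pk \,\big] \le 2\exp\!\big(-\epsilon^2 pk/3\big).$$
Writing $k = \alpha\, c_{min}$ with $\alpha \ge 1$ and substituting $\epsilon^2 = 3(d+2)(\ln n)/(c_{min}\, p)$, the exponent telescopes to $\epsilon^2 p\, c_{min}\,\alpha/3 = (d+2)\alpha \ln n$, so the failure probability for this particular cut is at most $2\,n^{-(d+2)\alpha}$. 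The entire purpose of the specific choice of $\epsilon$ is to make this bound decay like a fixed power of $n$ per unit of $\alpha$.

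Second, I would invoke Karger's cut-counting theorem: the number of cuts of value at most $\alpha\, c_{min}$ is at most $n^{2\alpha}$ for every $\alpha \ge 1$ (provable via the random-contraction argument). This is the one genuinely nontrivial external ingredient, and it is exactly what controls the exponentially many cuts. The count ($\sim n^{2\alpha}$) grows more slowly than the per-cut failure probability ($\sim n^{-(d+2)\alpha}$) decays, leaving a net factor of $n^{-d\alpha}$. To finish, I would assemble the union bound by organizing cuts according to their value: summing $2n^{-(d+2)\alpha_F}$ over all cuts $F$ and applying the counting bound — e.g.\ through a Stieltjes / integration-by-parts estimate of $\int_1^\infty n^{-(d+2)\alpha}\, d\big(n^{2\alpha}\big)$, or by bucketing $\alpha$ into dyadic ranges — the boundary contribution at $\alpha = 1$ is $n^{-(d+2)}\cdot n^2 = n^{-d}$, and the remaining mass is a geometric-type sum dominated by its first term, giving an overall bound of $O(n^{-d})$.

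The main obstacle is precisely this union bound: there are exponentially many cuts, so a naive argument fails outright, and one must pair the tail-probability decay rate (set by the engineered value of $\epsilon$) against the cut-counting growth rate so that their product is summable. Getting the numerical constants to line up — the factor $3$ in the Chernoff exponent against the $d+2$, and the $n^{2\alpha}$ from counting — is exactly why the theorem is stated with these particular constants, and it is the only delicate part of the calculation.
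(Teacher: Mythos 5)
The paper does not actually prove this statement---it is quoted from Karger \cite{karger:mathor99}---and your reconstruction is precisely Karger's original argument: a per-cut Chernoff bound whose exponent $(d+2)\alpha\ln n$ is engineered by the stated choice of $\epsilon$, the cut-counting theorem giving at most $n^{2\alpha}$ cuts of value at most $\alpha\, c_{min}$, and a union bound stratified by cut value, which sums to $O(n^{-d})$ exactly as you compute. Your opening caveat is also correct and worth noting: as written in the paper the quantifier ``for every $F\subseteq E$'' is too loose (concentration fails for, e.g., a single edge), and the statement must be read---as Karger states it, and as the paper in fact uses it for the cut $\bar F$---as ranging over cut-sets only.
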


\begin{observation}\label{thm:reg-karg}
When $c_{min} \cdot p \geq 9\ln n$, the statement of Theorem \ref{thm:karger} holds with high probability, i.e., with probability at least $1-O(1/n)$.
\end{observation}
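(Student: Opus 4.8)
The plan is to obtain the Observation as a direct specialization of Theorem~\ref{thm:karger}, with no additional work beyond choosing the free parameter $d$ appropriately. The key realization is that the constant $9$ appearing in the hypothesis $c_{min}\cdot p \ge 9\ln n$ is exactly $3(d+2)$ evaluated at $d=1$, which strongly suggests the instantiation $d=1$.

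First I would verify that the precondition $\epsilon \le 1$ of Theorem~\ref{thm:karger} is satisfied. Substituting $d=1$ gives $\epsilon = \sqrt{3(1+2)(\ln n)/(c_{min}\cdot p)} = \sqrt{9\ln n/(c_{min}\cdot p)}$. The assumption $c_{min}\cdot p \ge 9\ln n$ immediately yields $9\ln n/(c_{min}\cdot p) \le 1$, and hence $\epsilon \le 1$, so the hypothesis of the theorem is met.

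With the precondition verified, Theorem~\ref{thm:karger} applies with $d=1$ and guarantees that the concentration statement $\big| |F(p)| - \mathbb{E}_{G(p)}[|F(p)|] \big| \le \epsilon\,\mathbb{E}_{G(p)}[|F(p)|]$ holds for every $F\subseteq E$ simultaneously, with probability at least $1 - O(1/n^{d}) = 1 - O(1/n)$. This is precisely the claimed high-probability guarantee, so the Observation follows. There is no genuine obstacle here: the only thing to confirm is that the single choice $d=1$ both forces $\epsilon\le 1$ (via the hypothesis) and reduces the failure probability to $O(1/n)$, and the threshold $9\ln n$ is engineered so that these two requirements are met simultaneously.
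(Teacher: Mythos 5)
Your proposal is correct and is exactly the intended argument: the paper states this as an Observation without a written proof, precisely because it follows by instantiating Theorem~\ref{thm:karger} with $d=1$, where $3(d+2)=9$ makes the hypothesis $c_{min}\cdot p \ge 9\ln n$ equivalent to $\epsilon \le 1$ and the failure probability becomes $O(1/n)$. Your verification of the precondition and the choice of $d$ match the paper's reasoning in full.
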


Observation \ref{thm:reg-karg} basically determines the regime where the results of this section hold. However, notice that $c_{min} \cdot p \geq 9\ln n$ is a realistic assumption, since for most real-life scenarios the transmission probability will be some constant, and the size of the minimum cut in $G$ can very well be $\Omega(\ln n)$.


To tackle \mininfprob{} in the current setting, we are going to reduce it to a problem from \cite{kempe:esa05}, namely the \textsc{Minimum-Size Bounded-Capacity Cut} problem (MinSBCC).  In this problem, we are given a graph $G = (V,E)$, a source vertex $s \in V$, and a budget $B$. We are then asked to find a set $F \subseteq E$ of at most at most $B$ edges, which minimizes the number of nodes in the same component as $s$ in $G_F = (V, E \setminus F)$, i.e., $inf(V, E \setminus F, s)$. The main result of \cite{kempe:esa05} follows.

\begin{theorem} \label{thm:minsbcc}
For any $\lambda \in (0,1)$, there exists a polytime $(\frac{1}{\lambda}, \frac{1}{1-\lambda})$-approximation algorithm for MinSBCC: it finds a cut of size at most $\frac{1}{\lambda} B$, in which the number of nodes in the same component as $s$ in the resulting subgraph is at most $\frac{1}{1-\lambda}$ times the value of the optimal solution with size $B$.
\end{theorem}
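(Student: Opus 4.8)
The plan is to reduce MinSBCC to a budgeted minimum-cut problem and attack it by Lagrangian relaxation, exploiting the fact that each relaxed subproblem is a single minimum $s$--$t$ cut. First I would reformulate the problem: an optimal $F$ can always be taken to be $\delta(S)$, the set of edges with exactly one endpoint in the connected component $S \ni s$ that $F$ induces (removing any edge not leaving $S$ is wasteful), so that $inf(V, E\setminus F, s) = |S|$. Thus MinSBCC is equivalent to finding $S \ni s$ minimizing $|S|$ subject to $c(\delta(S)) \le B$, where $c(\delta(S)) = \sum_{e \in \delta(S)} c_e$ (here $c_e \equiv 1$, so this simply counts cut edges).

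For a multiplier $\mu \ge 0$ I would consider the unconstrained relaxation $g(\mu) = \min_{S \ni s}\big(|S| + \mu\, c(\delta(S))\big)$, with minimizer $S_\mu$. The key structural step is that $g(\mu)$ is computable by a single minimum cut: build an auxiliary graph $H_\mu$ on $V \cup \{t\}$, add a unit-capacity arc $v \to t$ for every $v \ne s$, and give each original edge $e$ capacity $\mu c_e$. A minimum $s$--$t$ cut with source side $A$ then pays $|A \cap V| - 1$ for the unit arcs of the vertices it keeps together with $s$, and $\mu\, c(\delta(A\cap V))$ for the original edges it severs, so $\mathrm{mincut}(H_\mu) = g(\mu) - 1$. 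Hence every $S_\mu$ is obtained in polynomial time.

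The analysis then chooses $\mu$ to balance the two objectives. Let $S^*$ be optimal for budget $B$, so $c(\delta(S^*)) \le B$ and $|S^*| = OPT$. By optimality of $S_\mu$ we get the single inequality $|S_\mu| + \mu\, c(\delta(S_\mu)) \le |S^*| + \mu\, c(\delta(S^*)) \le OPT + \mu B$. Setting $\mu = \tfrac{\lambda}{1-\lambda}\cdot\tfrac{OPT}{B}$ makes the right-hand side equal to $\tfrac{OPT}{1-\lambda}$; since both terms on the left are nonnegative, this forces simultaneously $|S_\mu| \le \tfrac{1}{1-\lambda}\,OPT$ and $c(\delta(S_\mu)) \le \tfrac{1}{\mu}\cdot\tfrac{OPT}{1-\lambda} = \tfrac{1}{\lambda}B$, which is exactly the claimed $(\tfrac1\lambda, \tfrac{1}{1-\lambda})$ guarantee.

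The remaining obstacle is that this choice of $\mu$ depends on the unknown $OPT$. I would resolve it by exhaustive guessing: since $OPT \in \{1,\dots,n\}$, try each candidate $g$, set $\mu_g = \tfrac{\lambda}{1-\lambda}\tfrac{g}{B}$, compute $S_{\mu_g}$ via the min-cut above, and among all $S_{\mu_g}$ with $c(\delta(S_{\mu_g})) \le \tfrac1\lambda B$ return one of minimum size. The guess $g = OPT$ is certified by the computation above to yield a feasible candidate of size at most $\tfrac{1}{1-\lambda}OPT$, so the returned solution meets both bounds, and the whole procedure is polynomial (alternatively one extracts all $O(n)$ parametric-cut breakpoints from a single parametric max-flow). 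The only genuinely delicate points are verifying the min-cut reduction---that the Lagrangian subproblem is \emph{exactly} an $s$--$t$ cut---and checking that the single balancing inequality cleanly splits into both bicriteria bounds; notably no rounding step is required, because minimum cuts are already integral.
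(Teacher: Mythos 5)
Your proposal is correct, but note that the paper itself contains no proof of this statement: Theorem~\ref{thm:minsbcc} is imported as a black box from \cite{kempe:esa05}, so there is no internal proof to compare against. Your reconstruction is essentially the argument of that cited source: there too the Lagrangian $\min_{S\ni s}\big(|S| + \mu\, c(\delta(S))\big)$ is solved exactly as an $s$--$t$ minimum cut (each node attached to an auxiliary sink by a unit arc, original edges scaled by $\mu$), and the multiplier is tuned so that the single optimality inequality $|S_\mu| + \mu\, c(\delta(S_\mu)) \le OPT + \mu B$ splits by nonnegativity into the two bicriteria bounds; your choice $\mu = \frac{\lambda}{1-\lambda}\cdot\frac{OPT}{B}$ gives exactly the right-hand side $\frac{OPT}{1-\lambda}$, and guessing $OPT \in \{1,\dots,n\}$ (or, as you note, sweeping the $O(n)$ breakpoints of a parametric max-flow) legitimately removes the dependence on the unknown optimum. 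Two trivial points you could make explicit: the degenerate case $B=0$ (where $\mu_g$ is undefined; just return the component of $s$ in $G$), and the fact that a Lagrangian minimizer $S_\mu$ need not induce a connected set containing $s$ --- this is harmless, since removing $\delta(S_\mu)$ leaves the component of $s$ with size at most $|S_\mu|$, so both guarantees survive.
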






Our approach for solving \mininfprob{} goes as follows. At first, we sample a graph $H = (V, E')$ from $G(p)$. Then, we create an instance of MinSBCC, where the graph under consideration is $H$, the source vertex is $s$, and the budget is $\gamma B p$ for a small constant $\gamma$ which we set later. Finally, we run the $(\frac{1}{\lambda}, \frac{1}{1-\lambda})$-approximation of \cite{kempe:esa05} on the created instance of MinSBCC, and get a solution $F' \subseteq E'$. Let now $S$ be all the vertices that are in the same connected component as $s$ in $H_{F'} = (V, E' \setminus F')$. Our returned solution for the original instance of \mininfprob{} is $\bar F = \{\{u,v\} \in E : u \in S, v \not\in S\}$.  


\begin{lemma} \label{lem:min-budget}
When the assumption of Observation \ref{thm:reg-karg} holds, $|\bar F| \leq \frac{\gamma}{(1-\epsilon)\lambda} B$ with probability at least $1-O(1/n)$, where $\epsilon$ is as in Theorem \ref{thm:karger} with $\epsilon \in (0,1)$.
\end{lemma}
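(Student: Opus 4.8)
The plan is to relate the original-graph cut $\bar F$ back to the MinSBCC solution $F'$ computed on the sample $H$, and then to transfer the resulting size bound from $H$ to $G$ via Karger's sparsification (Theorem~\ref{thm:karger}). The starting point is the observation that $\bar F$ is exactly the edge boundary of the vertex set $S$ in $G$, i.e. $\bar F = \{\{u,v\}\in E : u\in S,\ v\notin S\}$. Since $S$ is defined to be the connected component of $s$ in $H_{F'} = (V, E'\setminus F')$, removing $F'$ must destroy every edge of $H$ that crosses the partition $(S, V\setminus S)$; otherwise some vertex outside $S$ would remain reachable from $s$, contradicting that $S$ is a full component. Consequently, every edge of $\bar F$ that survives into the sample, that is every edge of $\bar F(p) = \bar F \cap E'$, must lie in $F'$, which gives the deterministic inclusion $\bar F(p) \subseteq F'$ and hence $|\bar F(p)| \le |F'|$.

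Next I would invoke the guarantee of the MinSBCC algorithm (Theorem~\ref{thm:minsbcc}) on the instance we constructed, whose budget is $\gamma B p$. This yields $|F'| \le \frac{1}{\lambda}\cdot \gamma B p$, and combining it with the previous step gives the bound $|\bar F(p)| \le \frac{\gamma B p}{\lambda}$ on the size of the sampled cut. To convert this into a bound on $|\bar F|$ in the original graph, I would use that in the uniform-probability setting each of the $|\bar F|$ edges survives independently with probability $p$, so $\mathbb{E}_{G(p)}[|\bar F(p)|] = p\,|\bar F|$. Under the assumption of Observation~\ref{thm:reg-karg} we have $c_{min}\cdot p \ge 9\ln n$, so Theorem~\ref{thm:karger} applies with the stated $\epsilon \in (0,1)$: with probability at least $1-O(1/n)$, every cut $F\subseteq E$ satisfies $|F(p)| \ge (1-\epsilon)\,\mathbb{E}_{G(p)}[|F(p)|]$. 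Applying this lower bound to $\bar F$ and chaining the inequalities,
\[
(1-\epsilon)\,p\,|\bar F| \;\le\; |\bar F(p)| \;\le\; |F'| \;\le\; \frac{\gamma B p}{\lambda},
\]
so cancelling the common factor $p$ yields $|\bar F| \le \frac{\gamma}{(1-\epsilon)\lambda}\,B$, exactly as claimed.

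The one point that requires care, and which I view as the main subtlety rather than a hard computation, is that $\bar F$ is \emph{data-dependent}: it is selected as a function of the very sample $H$ on which we want the concentration to hold. This is precisely why the argument leans on Karger's result in its uniform, ``for every cut $F\subseteq E$'' form. The high-probability event of Theorem~\ref{thm:karger} is a single event over the draw of $H = G(p)$ on which the two-sided concentration holds simultaneously for all cuts; conditioned on that event, the bound therefore applies to $\bar F$ no matter how $S$ was chosen from $H$. I would be careful to note that $\bar F$ is indeed a genuine cut (the edge boundary of $S$), so that Karger's guarantee is legitimately applicable to it, and that only the lower-bound direction $|\bar F(p)| \ge (1-\epsilon)\,\mathbb{E}_{G(p)}[|\bar F(p)|]$ of the concentration is needed. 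The $O(1/n)$ failure probability of the lemma then comes entirely from the single invocation of Theorem~\ref{thm:karger} via Observation~\ref{thm:reg-karg}.
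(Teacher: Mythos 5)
Your proof is correct and follows essentially the same route as the paper's: bound $|\bar F(p)|$ above by $|F'| \le \frac{\gamma Bp}{\lambda}$ via the MinSBCC budget guarantee, bound it below by $(1-\epsilon)p|\bar F|$ using the all-cuts form of Karger's theorem under Observation~\ref{thm:reg-karg}, and cancel $p$. If anything you are slightly more careful than the paper, which asserts the equality $F' = \bar F(p)$ where only the inclusion $\bar F(p) \subseteq F'$ is guaranteed (and suffices), and you rightly make explicit that the uniform ``for every $F \subseteq E$'' concentration is what licenses applying Theorem~\ref{thm:karger} to the sample-dependent cut $\bar F$.
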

\begin{proof}

Notice that the vertices that are in the same connected as $s$ in $(V,E \setminus \bar F)$, are exactly those that are connected to $s$ in $(V, E' \setminus F')$. Therefore, the random cut corresponding to $\bar F$ in $G(p)$ is $F'$, i.e., $F' = \bar F(p)$. Hence, $\mathbb{E}_{G(p)}[F'] = \mathbb{E}_{G(p)}[\bar F(p)] = p |\bar F|$. Therefore, using Theorem \ref{thm:karger}, we have that with probability at least $1-O(1/n)$: $\big{|}|F'| - p |\bar F|\big{|} \leq \epsilon p |\bar F| \implies |F'| \geq (1-\epsilon)p |\bar F|$.

Since $|F'| \leq \frac{\gamma Bp}{\lambda}$ (we ran the algorithm of Theorem \ref{thm:minsbcc} with budget $\gamma Bp$), we get $ \frac{\gamma Bp}{\lambda} \geq |F'| \geq (1-\epsilon)|\bar F|p$ with probability at least $1-O(\frac{1}{n})$.  Rearranging terms implies that $|\bar F| \leq \frac{\gamma}{(1-\epsilon)\lambda} B$.
\end{proof}

\begin{lemma} \label{lem:min-approx}
$|S| \leq \frac{\gamma}{1-\lambda} OPT$ with probability at least $1 - \frac{2}{\gamma}$, where $OPT$ is the value of the optimal solution (the expected number of nodes infected).
\end{lemma}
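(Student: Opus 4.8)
The plan is to compare the cut $F'$ produced on the sampled graph $H=(V,E')$ against the restriction $F^\ast \cap E'$ of an optimal \mininfprob{} solution $F^\ast$ to that same sample. By definition $|F^\ast|\le B$ and $OPT = \mathbb{E}_{G(p)}[inf(V,E(p)\setminus F^\ast,s)]$. The key observation is that $F^\ast\cap E'$ is a candidate feasible solution for the MinSBCC instance we constructed on $H$, and that deleting it leaves $inf(V,E'\setminus F^\ast,s)$ nodes connected to $s$; so if we can argue that (i) $F^\ast\cap E'$ respects the MinSBCC budget $\gamma Bp$ and (ii) $inf(V,E'\setminus F^\ast,s)$ is not much larger than $OPT$, then the approximation guarantee of Theorem~\ref{thm:minsbcc} will force $|S|$ to be small.

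Both (i) and (ii) will follow from Markov's inequality applied to the single random sample $E'$. For (i), each edge of $F^\ast$ survives into $E'$ independently with probability $p$, so $\mathbb{E}[|F^\ast\cap E'|]=p|F^\ast|\le pB$; Markov then gives $\Pr[|F^\ast\cap E'|>\gamma Bp]\le 1/\gamma$, i.e.\ with probability at least $1-1/\gamma$ the set $F^\ast\cap E'$ is feasible for the MinSBCC instance of budget $\gamma Bp$. For (ii), note that $inf(V,E'\setminus F^\ast,s)$ is a nonnegative random variable (over the same draw of $E'$) whose expectation is exactly $OPT$ by the percolation interpretation of $G(p)$; hence $\Pr[inf(V,E'\setminus F^\ast,s)>\gamma\cdot OPT]\le 1/\gamma$.

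A union bound over these two bad events shows that, with probability at least $1-2/\gamma$, we simultaneously have $|F^\ast\cap E'|\le\gamma Bp$ and $inf(V,E'\setminus F^\ast,s)\le\gamma\cdot OPT$. On this good event I would finish by chaining the three inequalities. Writing $OPT_{\mathrm{MinSBCC}}$ for the optimum of the MinSBCC instance on $H$, feasibility of $F^\ast\cap E'$ gives $OPT_{\mathrm{MinSBCC}}\le inf(V,E'\setminus(F^\ast\cap E'),s)=inf(V,E'\setminus F^\ast,s)$ (since $E'\setminus F^\ast=E'\setminus(F^\ast\cap E')$), while Theorem~\ref{thm:minsbcc} guarantees $|S|=inf(V,E'\setminus F',s)\le\frac{1}{1-\lambda}OPT_{\mathrm{MinSBCC}}$. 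Combining with (ii) yields $|S|\le\frac{\gamma}{1-\lambda}OPT$, as claimed.

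The step I expect to require the most care is making sure the two Markov estimates are applied to the \emph{same} sample $H$ and then combined correctly: the budget feasibility of $F^\ast\cap E'$ and the objective value $inf(V,E'\setminus F^\ast,s)$ are correlated random quantities (both driven by which edges of $E$ land in $E'$), so the argument must not treat them as independent — the union bound over the two failure events is what sidesteps this. A second subtlety is that the MinSBCC optimum lives on the random graph $H$ rather than on $G$, so the only legitimate way to upper-bound it is through the explicit feasible solution $F^\ast\cap E'$ rather than through $OPT$ directly.
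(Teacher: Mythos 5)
Your proof is correct and follows essentially the same route as the paper's: both restrict the optimal solution $F^*$ to the sample as $\hat F = F^* \cap E'$, apply Markov's inequality twice (to $|\hat F|$ with mean $p|F^*|\le pB$, and to the infected-set size with mean exactly $OPT$), take a union bound over the two failure events, and on the success event chain feasibility of $\hat F$ for the MinSBCC instance through the $\frac{1}{1-\lambda}$ guarantee of Theorem~\ref{thm:minsbcc}. Your explicit remarks --- that $E'\setminus F^* = E'\setminus(F^*\cap E')$, and that the two Markov quantities are correlated but the union bound makes this harmless --- are details the paper leaves implicit, and are handled correctly.
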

\begin{proof}
Let $F^*$ denote the optimal solution (so $|F^*| \leq B$), and let $\hat F = F^* \cap E'$ be a random variable denoting the edges of $F^*$ that are present in $E'$. Let also $S_{\hat F}$ be the random variable denoting the nodes that are in the same connected component as $s$ in $(V, E' \setminus \hat F)$.  We say that there was a ``success'' in the process of sampling $H$ if the following two conditions are satisfied: \textbf{1)} $|\hat F| \leq \gamma Bp$ and \textbf{2)} $|S_{\hat F}| \leq \gamma \cdot OPT$. If either condition is not true we say that there was a ``failure''.  

Suppose that there was a success. Then the first condition implies that $\hat F$ was a feasible solution for the MinSBCC instance (since its size was within the given budget), and hence $|S| \leq \frac{1}{1-\lambda} |S_{\hat F}|$.  Then the second condition implies $|S| \leq \frac{\gamma}{1-\lambda} OPT$ as desired.

Finally, we need to show that the probability of success is at least $1 - \frac{2}{\gamma}$, or equivalently that the probability of failure is at most $\frac{2}{\gamma}$. Clearly $\mathbb{E}_{G(p)}[|\hat F|] = p|F^*| \leq pB$, so by Markov's inequality $\Pr[|\hat F| > \gamma Bp] \leq \frac{1}{\gamma}$. Similarly, $\mathbb{E}[|S_{\hat F}|] = OPT$ by the definition of $OPT$, and so by Markov $\Pr[|S_{\hat F}| > \gamma \cdot OPT] \leq \frac{1}{\gamma}$. A union bound implies that the probability of failure is at most $2/\gamma$.
\end{proof}


\begin{theorem} \label{thm:min-main}
There exists an $(O(1), O(1))$-approximation for the \mininfprob{} that works with high probability, as long as the assumption of Observation \ref{thm:reg-karg} holds.
\end{theorem}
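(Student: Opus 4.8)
The plan is to combine Lemmas~\ref{lem:min-budget} and~\ref{lem:min-approx} into a single-run $(O(1),O(1))$ bound, and then \emph{boost} the infection guarantee from constant probability to high probability through independent repetition. The first ingredient I would establish is a deterministic pointwise inequality linking the computable quantity $|S|$ to the true objective. Since $\bar F$ removes exactly the edges of $G$ crossing between $S$ and $V\setminus S$, and $s\in S$, the component of $s$ in the residual graph $(V,E\setminus\bar F)$ is already contained in $S$; running the disease percolation afterwards only deletes further edges and can only shrink this component. Hence $inf(V,E(p)\setminus\bar F,s)\le |S|$ for every percolation outcome, and therefore
\[
\mathbb{E}_{G(p)}\big[inf(V,E(p)\setminus\bar F,s)\big]\le |S|.
\]
This inequality is the linchpin: it lets the algorithm certify solution quality through the observable quantity $|S|$, without any knowledge of $OPT$.

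Next I would fix $\lambda$ and $\gamma$ to be constants (say $\lambda=1/2$ and $\gamma$ large enough that $2/\gamma\le 1/4$). Then Lemma~\ref{lem:min-budget} gives $c(\bar F)=|\bar F|\le \frac{\gamma}{(1-\epsilon)\lambda}B=O(1)\cdot B$ with probability $1-O(1/n)$, while Lemma~\ref{lem:min-approx} gives $|S|\le \frac{\gamma}{1-\lambda}OPT=O(1)\cdot OPT$ with probability at least $3/4$; here $\epsilon\in(0,1)$ is the Karger constant from Theorem~\ref{thm:karger}, a fixed constant in this regime so that $\frac{1}{1-\epsilon}=O(1)$. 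Together with the pointwise bound, a single run is already an $(O(1),O(1))$-approximation, but only with constant probability, because the infection guarantee rests on two Markov estimates whose failure probability cannot be lowered without enlarging $\gamma$ and hence both approximation factors.

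To amplify the probability I would run $T=\Theta(\log n)$ independent copies of the sampling algorithm, producing candidates $(\bar F_1,S_1),\dots,(\bar F_T,S_T)$, then directly compute $c(\bar F_i)$ and $|S_i|$ for each, discard every copy with $c(\bar F_i)>\frac{\gamma}{(1-\epsilon)\lambda}B$, and output the surviving candidate of smallest $|S_i|$. The budget criterion then holds deterministically with factor $\frac{\gamma}{(1-\epsilon)\lambda}=O(1)$. Call a copy \emph{good} if it satisfies both this budget bound and $|S_i|\le \frac{\gamma}{1-\lambda}OPT$; by a union bound over the two lemmas a single copy is good with probability at least $1-2/\gamma-O(1/n)\ge 1/2$ for large $n$. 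By independence, the probability that no copy is good is at most $2^{-T}$, which is $1/\poly(n)$ once the constant in $T$ is chosen appropriately. On that high-probability event some good copy $i^\star$ survives filtering and has $|S_{i^\star}|\le \frac{\gamma}{1-\lambda}OPT$, so the output copy $i$ satisfies $|S_i|\le |S_{i^\star}|\le \frac{\gamma}{1-\lambda}OPT$; invoking the pointwise bound gives $\mathbb{E}_{G(p)}[inf(V,E(p)\setminus\bar F_i,s)]\le |S_i|=O(1)\cdot OPT$, which is the desired infection guarantee.

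I expect the boosting to be the conceptual obstacle: Lemma~\ref{lem:min-approx} is intrinsically constant-probability, and blind repetition is worthless unless one can \emph{select} the best repetition. The pointwise inequality $\mathbb{E}_{G(p)}[inf]\le |S|$ is precisely what makes selection possible, turning the unobservable objective into the observable $|S|$; verifying it carefully---in particular that $s\in S$ and that percolation only removes edges---is the main technical point, after which the probability amplification is routine.
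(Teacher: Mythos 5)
Your proposal is correct and follows essentially the same route as the paper: combine Lemma~\ref{lem:min-budget} and Lemma~\ref{lem:min-approx} for a constant-probability $(O(1),O(1))$ guarantee in a single run, then repeat $O(\log n)$ times and keep the best solution. You additionally spell out two details the paper compresses into ``clearly'' and ``taking the best single solution''---namely the pointwise bound $\mathbb{E}_{G(p)}[inf(V,E(p)\setminus\bar F,s)]\leq |S|$ (valid since percolation only deletes edges from the residual graph whose $s$-component is exactly $S$) and the selection rule of filtering by the observable budget and minimizing the observable $|S_i|$---which is precisely the intended instantiation of the paper's argument.
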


\begin{proof}
If we set $\gamma$ to be a large enough constant (say, $4$), then with probability at least $1/2 - O(1/n)$ we return a solution $\bar F$ which violates the budget by at most $O(1)$ (Lemma~\ref{lem:min-budget}), and the size of the connected component in $(V,E \setminus \bar F)$ which contains $s$ is at most $O(1) \cdot OPT$ (Lemma~\ref{lem:min-approx}).  Clearly this implies that $\mathbb{E}_{G(p)}[inf(V,E(p) \setminus \bar F, s)]$ is also at most $O(1) \cdot OPT$. Thus, our algorithm gives the bounds in Theorem~\ref{thm:min-main} with constant probability. By repeating this process $O(\log n)$ times and taking the best single solution, this algorithm can be made to work with high probability.
\end{proof}

\section{The SAA Path-Dependent Framework for Arbitrary Networks}\label{sec:SAA}

Consider a general instance of \mininfprob{}. For a suitable number $N \leq \poly(n,m)$ that is going to be set later, we simulate the disease-percolation process on $G$ independently $N$ times. In other words, we independently sample $N$ graphs $G_j=(V, E_j)$, $j=1, 2, \ldots, N$, where $E_j \subseteq E$ is the subset of edges acquired in the $j^{th}$ simulation (or sample), when each edge is retained with probability $p_e$. The heart of our approach is to then show how these ``typical'' samples $G_j$ can guide us towards computing a provably-good solution for our given probabilistic percolation model. 

We start by presenting the linear program LP (\ref{LP-1})-(\ref{LP-4}). This LP models an ``empirical'' solution to the problem, when the diffusion process can only result in the graphs $G_j$, and each of these graphs materializes with probability $1/N$. We use $\mathcal{P}(s, v, G_j)$ to denote the set of paths from $s$ to $v$ in the graph $G_j$, and $[k]$ to denote the set $\{1, 2, \ldots, k\}$ for any positive integer $k$. For the integral version of our LP, $x_e$ is the indicator variable for removing edge $e$, and $y_{vj}$ the indicator for vertex $v$ \emph{not becoming reachable} from $s$ in $G_j$ after our edge-removal. Then, constraint (\ref{LP-2}) makes sure that $v$ is disconnected from $s$ in $G_j$ iff for every path of $\mathcal{P}(s, v, G_j)$ at least one edge of the path has been removed. Constraint (\ref{LP-3}) captures the budget constraint, and the objective function (\ref{LP-1}) measures exactly the expected number of infections, when each $G_j$ appears with probability $1/N$. Finally, in order to be able to efficiently solve the system, the $\{0,1\}$--variables are relaxed to lie in $[0,1]$.
\begin{align}
    \min &\frac{1}{N}\sum_{j \in [N]} \sum_{v \in V} (1-y_{vj}) \text{ such that} \label{LP-1} \\
    &\sum_{e\in P} x_e \geq y_{vj}, ~~\forall j \in [N], ~\forall v \in V, ~\forall P\in \mathcal{P}(s, v, G_j)  \label{LP-2}\\
    &\sum_{e \in E} c_e x_e \leq B \label{LP-3}\\
    &x_e, y_{vj} \in [0,1], ~\text{ for all $j\in [N], v \in V, e \in E$}\label{LP-4}
\end{align}

Our algorithm involves the following steps:
\begin{enumerate}
\item Solve LP (\ref{LP-1})-(\ref{LP-4}), and let $x, y$ be the optimal fractional solution. This solution can be computed in polynomial time via the ellipsoid method, with a \emph{separation oracle} that checks if the shortest-path distance from $s$ to $v$ in $G_j$ (with edge weights $x_e$) is less than $y_{vj}$ \cite{gls}. 
\item For some user-specified constant $\epsilon \in (0,1)$, define the following sets for the sake of analysis:
\[ 
    S(j)=\{v \in V: y_{vj}\geq \epsilon\} \text{ for every } j \in [N], \mbox{ and }
    \mathcal{P}_{hit}=\cup_j \cup_{v\in S(j)} \mathcal{P}(s, v, G_j). 
\] 
\item Let $F_0$ denote the set of edges which will constitute our returned solution. For some constant $\gamma$ that will be defined later, put each edge $e \in E$ independently in $F_0$, with probability 
\begin{align}
    x'_e=\min\Big{\{}\frac{(\gamma+5)x_e\log{n}}{\epsilon}, 1\Big{\}} \notag
\end{align}
\end{enumerate}

For any fixed $F \subseteq E$, we define random variables $h(G_j,F)$ and $h(G,F)$, where the randomness here is over the choice of the $G_j$'s, i.e., the randomness is of Type 3. Let $h(G_j, F) = inf(V,E_j \setminus F,s)$ and $h(G, F) = \frac{1}{N}\sum_{j=1}^N h(G_j, F)$; the former represents the number of infections in the $j$-th sample if $F$ are the edges to be removed, and the latter represents the average number of infections over the $N$ sampled graphs if again $F$ is the set of edges removed. 

For the small user-defined constant $\epsilon > 0$, we now choose $N = \frac{3n}{\epsilon^2} \log \big{(}n^2 \cdot 2^{m+1}\big{)}$ and present a simple concentration result in Lemma~\ref{lemma:conc}; note that for this choice we have $N = \poly(n,m)$ and hence our algorithm runs in polynomial time.

\begin{lemma}\label{lemma:conc}
For the chosen value $N=\frac{3n}{\epsilon^2} \log \big{(}n^2 \cdot 2^{m+1}\big{)}$, with probability at least $1-\frac{1}{n^2}$, we have $h(G, F)\in \big{[}(1-\epsilon)\mathbb{E}[h(G, F)], (1+\epsilon)\mathbb{E}[h(G, F)]\big{]}$ for all sets $F\subseteq E$. The expectation here is over randomness of Type 3, and specifically over the random sampling of the $N$ graphs $G_j$.
\end{lemma}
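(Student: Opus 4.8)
The plan is to fix a single edge-set $F \subseteq E$, establish a very sharp concentration bound for $h(G,F)$ around its mean with a failure probability on the order of $2^{-m}/n^2$, and then take a union bound over all $2^m$ possible subsets $F$. The $2^{m+1}$ factor buried inside the logarithm in the definition of $N$ is engineered precisely so that this union bound closes: each single-$F$ bound will cost a factor roughly $2^{-(m+1)}$, and summing over $2^m$ sets leaves the claimed $1/n^2$.

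For a fixed $F$, observe that the $N$ random variables $h(G_1,F), \ldots, h(G_N,F)$ are i.i.d.\ (each $G_j$ is an independent percolation sample), and $h(G,F)$ is their empirical average. Two structural facts drive the argument. First, each $h(G_j,F) = inf(V, E_j \setminus F, s)$ lies in the range $[1, n]$: it is a count of vertices, so it is at most $n$, and it is at least $1$ because the source $s$ is always in its own connected component regardless of which edges are removed. Second, because of this lower bound, the common mean $\mu := \mathbb{E}[h(G,F)] = \mathbb{E}[h(G_j,F)]$ satisfies $\mu \ge 1$. I would then normalize, setting $X_j = h(G_j,F)/n \in [0,1]$ and $S = \sum_{j} X_j$, so that $S$ is a sum of independent $[0,1]$-valued variables with mean $\mu_S = N\mu/n$, and the bad event $h(G,F) \notin [(1-\epsilon)\mu, (1+\epsilon)\mu]$ is exactly $|S - \mu_S| > \epsilon \mu_S$.

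The core step is the multiplicative Chernoff bound for bounded independent variables, which gives $\Pr[|S-\mu_S| > \epsilon\mu_S] \le 2\exp(-\epsilon^2 \mu_S/3)$ for $\epsilon \in (0,1)$. Plugging in $\mu_S = N\mu/n \ge N/n$ (using $\mu \ge 1$) together with the chosen $N = \frac{3n}{\epsilon^2}\log(n^2\cdot 2^{m+1})$ makes the exponent at least $\log(n^2 \cdot 2^{m+1})$, so the single-$F$ failure probability is at most $2/(n^2 2^{m+1}) = 1/(n^2 2^m)$. A union bound over all $2^m$ subsets $F \subseteq E$ then yields a total failure probability at most $1/n^2$, which is the lemma.

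The main thing to get right — rather than a genuine obstacle — is the interplay between the multiplicative (relative-error) nature of the desired guarantee and the scale of the variables. A naive Hoeffding/additive bound on variables ranging over $[0,n]$ would lose a factor of roughly $n$ in the exponent and would not suffice; the multiplicative Chernoff bound works only because $\mu \ge 1$ supplies enough ``signal,'' and it is exactly this observation (that $s$ is always counted) that lets the chosen $N$ dominate the $\log 2^{m+1}$ term needed for the union bound. I would also note explicitly that the Chernoff bound I invoke is the version valid for arbitrary independent $[0,1]$-variables, not just Bernoulli ones, which is justified by the standard convexity estimate $\mathbb{E}[e^{tX}] \le \exp((e^t-1)\mathbb{E}[X])$ for $X \in [0,1]$.
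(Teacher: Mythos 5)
Your proposal is correct and follows essentially the same argument as the paper's own proof: the same normalization $X_j = h(G_j,F)/n \in [0,1]$, the same multiplicative Chernoff bound whose exponent is salvaged by the observation $\mathbb{E}[h(G,F)] \geq 1$ (the source $s$ is always infected), and the same union bound over all $2^m$ subsets with the $2^{m+1}$ factor inside the logarithm absorbing the leading constant $2$. Your extra remarks---that an additive Hoeffding bound on $[0,n]$-valued variables would be too weak, and that the Chernoff bound must be the version for general independent $[0,1]$-variables (as in the paper's auxiliary lemma)---are accurate but do not change the route.
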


\begin{proof}
 For a fixed set $F$, the quantities $h(G_j, F)$ are independent. Further, let $X_j = \frac{h(G_j, F)}{n} \in [0, 1]$ and $X = \sum^{N}_{j=1}X_j$. Note that $X = \frac{N}{n}h(G,F)$. Using the Chernoff bound of Lemma \ref{chernoff} yields:
\begin{align}
    \Pr\Big{[}X \notin \big{[}(1-\epsilon)\mathbb{E}[X], (1+\epsilon)\mathbb{E}[X]\big{]} \Big{]} & \leq 2 e^{-\frac{\epsilon^2 \mathbb{E}[X]}{3}} = 2 e^{-\frac{\epsilon^2 N \cdot \mathbb{E}[h(G,F)]}{3n}} \leq \frac{1}{n^2 2^m} \notag
\end{align}
To get the last inequality we use the definition of $N$, and the fact that $\mathbb{E}[h(G,F)] \geq 1$ (since there is always at least one infection, namely the node $s$). Finally, since $X = \frac{N}{n}h(G,F)$, we also have:
\begin{align}
      \Pr\Big{[}h(G,F) \notin \big{[}(1-\epsilon)\mathbb{E}[h(G,F)], (1+\epsilon)\mathbb{E}[h(G,F)]\big{]} \Big{]}  \leq \frac{1}{n^2 2^m} \notag
\end{align}
Because the number of possible subsets $F$ is $2^m$, a union bound over them concludes the proof. 
\end{proof}

Let $F^* = \argmin_F\mathbb{E}[h(G, F)]$, where the expectation is again over the random sampling of the graphs $G_j$ (Type 3 randomness). Since for every $F$ we have $\mathbb{E}[h(G_j,F)] = \mathbb{E}_{G(\vec p)}[inf(V,E(\vec p) \setminus F,s)]$ for all $G_j$, and $\mathbb{E}[h(G,F)] = \frac{1}{N}\sum_{j}\mathbb{E}[h(G_j,F)]$, we see that $F^*$ is actually the optimal edge set for \mininfprob{}. Also, we define the random variable $\hat{F} = \argmin_F h(G, F)$, denoting the optimal integral solution of LP (\ref{LP-1})-(\ref{LP-4}); $\hat F$ is actually the optimal empirical solution for the sampled set of graphs. Recall now that $F_0$ is the subset of edges computed by our LP rounding algorithm, and recall the parameter $\Gamma$ from Section~\ref{sec:intro}, indicating the expected number of paths in a randomly-drawn graph (with randomness being of types 1 and 3). 

\begin{proof} (\textbf{Theorem~\ref{thm:main}.})
Showing the first part of the theorem is easy. Since each edge $e$ is removed (independently) with probability $x'_e$, the expected cost of the removed edges is $$\mathbb{E}[c(F_0)] \leq \sum_e c_e x'_e \leq \frac{(\gamma+5)\log{n}}{\epsilon}\sum_e c_e x_e \leq \frac{((\gamma+5)\log{n})B}{\epsilon}$$ where the last inequality follows from constraint (\ref{LP-3}). Next, we can assume w.l.o.g. that $B = 1$. To do so, we first hard-wire $x_e = 0$ for all edges $e$ with $c_e > B$, thus ignoring these edges in our edge-removal problem. Then, we uniformly scale all remaining $c_e$'s and the budget by a factor of $1/B$. Using the second statement of Lemma \ref{chernoff} with $R = (6(\gamma + 5)\log n) / \epsilon$ gives:
$$\Pr[c(F_0) \geq (6(\gamma + 5)\log n) / \epsilon] \leq O(1/n^\gamma)$$

We next prove the second part of the theorem. The event $\mathcal{A}$ that is a function of the randomness of types 1, 2, and 3 is the conjunction of the following three events:
\begin{itemize}
    \item $\mathcal{A}_1$: For each $P \in \mathcal{P}_{hit}$, there exists an edge $e \in P$, such that $e \in F_0$.
    \item $\mathcal{A}_2$: $h(G, F^*)\leq (1+\epsilon)\mathbb{E}[h(G, F^*)]$.
    \item $\mathcal{A}_3$: $h(G, F_0)\geq (1-\epsilon)\mathbb{E}[h(G, F_0)]$. 
\end{itemize}
We will first show that $\mathbb{E}[inf(V,E(\vec p) \setminus F_0,s) \bigm| \mathcal{A} ]\leq (1+O(\epsilon))OPT$, and then lower-bound $\Pr[\mathcal{A}]$. 

Let us first condition on $\mathcal{A}$. Consider any $j \in [N]$. By $\mathcal{A}_1$ and the definition of the set $\mathcal{P}_{hit}$, the only vertices in $(V,E_j \setminus F_0)$ that are reachable from $s$ can be those in $V \setminus S(j)$; these vertices are exactly the ones getting infected in the $j$-th sample. Further, by definition we have $y_{vj} < \epsilon$ for every $v \in V \setminus S(j)$. Therefore, the empirical number of infections over all the samples is:
\begin{align}
h(G, F_0) &\leq \frac{1}{N}\sum_{j \in [N]} \sum_{v\not\in S(j)} 1  \leq \frac{1}{N}\sum_{j \in [N]} \sum_{v\not\in S(j)} \frac{1-y_{vj}}{1-\epsilon} \leq \frac{h(G, \hat{F})}{1-\epsilon} \leq \frac{h(G, F^*)}{1-\epsilon} \label{aux-l1}
\end{align}
The second inequality above follows because the LP value is a lower bound on $h(G, \hat{F})$, and the last inequality follows since $\hat{F}$ minimizes $h(G, F)$. Combining (\ref{aux-l1}) and the definitions of $\mathcal{A}_2$, $\mathcal{A}_3$ yields
 \begin{align}
\mathbb{E}[h(G, F_0)] \leq \frac{h(G, F_0)}{1-\epsilon} \leq \frac{h(G, F^*)}{(1-\epsilon)^2} \leq\frac{(1+\epsilon)}{(1-\epsilon)^2}\mathbb{E}[h(G, F^*)] = (1+O(\epsilon)) \mathbb{E}[h(G, F^*)]\notag
\end{align}

To conclude the proof we need to lower-bound $\Pr[\mathcal{A}]$. First, Lemma~\ref{lemma:conc} shows that each of $\mathcal{A}_2$ and $\mathcal{A}_3$ holds with probability at least $1-1/n^2$.
Let us consider $\mathcal{A}_1$ next.

Let $\mathcal{B}$ be a random variable denoting the number of paths over all the samples $G_j$. Since $\Gamma$ is the expected number of paths in a single graph, linearity of expectation gives $\mathbb{E}[\mathcal{B}] = \Gamma N = O( \frac{n^3 \Gamma}{\epsilon^2})$, since $m = O(n^2)$. Thus, by using Markov's inequality we have $\Pr[\mathcal{B} = \Omega(\frac{n^5 \Gamma}{\epsilon^2})] \leq O(1/n^2)$. Equivalently, $\Pr[\mathcal{B} = O(\frac{n^5 \Gamma}{\epsilon^2})] \geq 1-O(1/n^2)$. The randomness in the previous statements is of types 1 and 3.

Consider now a path $P \in \mathcal{P}_{hit}$. If there exists an $e \in P$ such that $x'_e = 1$, then this path is broken. Hence, assume that for all $e \in P$ we have $x'_e < 1$. By the definition of the paths in $\mathcal{P}_{hit}$ we also have $\sum_{e \in P}x'_e \geq (\gamma+5)\log n$. Therefore, the probability that all edges of $P$ survive is at most 
$$\prod_{e \in P} (1-x'_e) \leq e^{-\sum_{e \in P} x'_e} \leq e^{- (\gamma + 5) \log n } \leq n^{-(\gamma+5)}$$ In the end, a union bound over all $P \in \mathcal{P}_{hit}$ gives:
\begin{align}
    \Pr[\mathcal{A}_1 | \mathcal{B}] \geq 1 - \frac{\mathcal{B}}{n^{\gamma+5}} \notag
\end{align}
Combining everything gives $\Pr[\mathcal{A}_1] \geq (1- O(\frac{\Gamma}{\epsilon^2 n^\gamma})) (1- O(\frac{1}{n^2})) = 1- O(\frac{\Gamma}{\epsilon^2 n^\gamma}) - O(\frac{1}{n^2})$. Hence, putting down all the lower bounds for $\mathcal{A}_1, \mathcal{A}_2$ and $\mathcal{A}_3$ yields $\Pr[\mathcal{A}] \geq 1- O(\frac{\Gamma}{\epsilon^2 n^\gamma}) - O(\frac{1}{n^2})$. 
\end{proof} 

\begin{corollary}\label{SAA-final}
When $\Gamma \leq  \poly(n)$, we have $\mathbb{E}[inf(V,E(\vec p) \setminus F_0,s)]\leq (1+O(\epsilon) + O(1/n)) OPT$, where the randomness is with respect to  Type 1 (if applicable), Type 2, and Type 3.
\end{corollary}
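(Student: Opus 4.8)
The plan is to upgrade the \emph{conditional} infection bound of Theorem~\ref{thm:main} into an \emph{unconditional} one via the law of total expectation, using the crude deterministic bound $inf(V,E(\vec p)\setminus F_0,s)\le n$ together with the fact that the failure event $\bar{\mathcal{A}}$ has probability only $O(1/n^2)$ once $\gamma$ is chosen appropriately. The only input from the hypothesis $\Gamma\le\poly(n)$ is that it lets us drive the $\Gamma$-dependent term in $\Pr[\mathcal{A}]$ below $O(1/n^2)$.

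First I would fix $\gamma$ as a function of the (constant) degree of the polynomial bounding $\Gamma$. Writing $\Gamma\le n^c$ for a constant $c$ and recalling that $\epsilon$ is a fixed constant, the term $O\!\big(\tfrac{\Gamma\log n}{\epsilon^2 n^{\gamma}}\big)$ appearing in the lower bound on $\Pr[\mathcal{A}]$ in Theorem~\ref{thm:main} becomes $O\!\big(\tfrac{\log n}{n^{\gamma-c}}\big)$; choosing $\gamma\ge c+3$ makes this $O(1/n^2)$. Hence $\Pr[\mathcal{A}]\ge 1-O(1/n^2)$, equivalently $\Pr[\bar{\mathcal{A}}]=O(1/n^2)$, where the randomness is of Types~1,~2,~and~3.

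Next I would split the expectation over $\mathcal{A}$ and its complement:
$$\mathbb{E}[inf(V,E(\vec p)\setminus F_0,s)] = \mathbb{E}[inf\mid\mathcal{A}]\,\Pr[\mathcal{A}] + \mathbb{E}[inf\mid\bar{\mathcal{A}}]\,\Pr[\bar{\mathcal{A}}].$$
For the first term I use $\Pr[\mathcal{A}]\le 1$ together with the conditional guarantee $\mathbb{E}[inf\mid\mathcal{A}]\le(1+\epsilon)\,OPT$ from Theorem~\ref{thm:main}. For the second term I invoke the deterministic bound $inf(V,E(\vec p)\setminus F_0,s)\le n$ (there are only $n$ vertices), so $\mathbb{E}[inf\mid\bar{\mathcal{A}}]\le n$, and combine it with $\Pr[\bar{\mathcal{A}}]=O(1/n^2)$ to bound its contribution by $n\cdot O(1/n^2)=O(1/n)$. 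Together these give $\mathbb{E}[inf(V,E(\vec p)\setminus F_0,s)]\le(1+\epsilon)\,OPT + O(1/n)$.

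Finally I would convert the additive error into a multiplicative one: since the source $s$ is always infected we have $OPT\ge 1$, so $O(1/n)\le O(1/n)\cdot OPT$, and substituting yields $\mathbb{E}[inf(V,E(\vec p)\setminus F_0,s)]\le (1+\epsilon+O(1/n))\,OPT = (1+O(\epsilon)+O(1/n))\,OPT$, as claimed. The only step requiring any care is the $\bar{\mathcal{A}}$ term of the decomposition: one must notice that conditioning on the rare failure event could in principle inflate the number of infections, but that it never exceeds $n$, and that this trivial bound already suffices because $\Pr[\bar{\mathcal{A}}]$ decays like $1/n^2$. I expect no genuine obstacle beyond verifying that $\gamma$ can be taken large enough relative to the polynomial degree of $\Gamma$ so that $\bar{\mathcal{A}}$ is indeed $O(1/n^2)$.
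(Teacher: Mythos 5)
Your proposal is correct and follows essentially the same argument as the paper's own proof: choose $\gamma$ large enough (given $\Gamma\le\poly(n)$) so that $\Pr[\bar{\mathcal{A}}]=O(1/n^2)$, split $\mathbb{E}[inf(V,E(\vec p)\setminus F_0,s)]$ by conditioning on $\mathcal{A}$, bound the failure term by $n\cdot O(1/n^2)=O(1/n)$, and absorb the additive error using $OPT\ge 1$. Your explicit choice $\gamma\ge c+3$ to handle the $\log n$ factor is a slightly more careful rendering of the same step the paper performs implicitly.
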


\begin{proof}
When $\Gamma \leq \poly(n)$, we set $\gamma$ large enough such that $O(\frac{\Gamma}{\epsilon^2 n^\gamma}) = O(\frac{1}{n^2})$. Using Theorem \ref{thm:main} we then have:
\begin{align}
    \mathbb{E}[inf(V,E(\vec p) \setminus F_0,s)] &= \mathbb{E}[inf(V,E(\vec p) \setminus F_0,s) | \mathcal{A}] \Pr[\mathcal{A}] + \mathbb{E}[inf(V,E(\vec p) \setminus F_0,s) | \bar{\mathcal{A}}] \Pr[\bar{\mathcal{A}}] \notag \\ &\leq (1+O(\epsilon))OPT + n O(1/n^2) \leq (1+O(\epsilon) + O(1/n)) OPT \notag
\end{align}
To get the first inequality we use the simply upper bound of $\mathbb{E}[inf(V,E(\vec p) \setminus F_0,s) | \bar{\mathcal{A}}] \leq n$, and for the last one we use the fact that $1 \leq OPT$ ($s$ is always infected).
\end{proof}

\section{Counting Paths in the Chung-Lu Random Graph Model}
\label{sec:path-count}

Recall the random graph model of \cite{DBLP:journals/siamdm/ChungL06}. Here we are given vertices $V$, where each vertex $v \in V$ comes with a positive integer $w_v$ indicating its expected degree in the graph. For every pair of vertices $u$ and $v$, the edge $(u,v)$ is independently included in the graph with probability $q_{u,v} = w_u w_v / \sum_{r \in V}w_r$. Furthermore, we consider a power-law model, in which $n_i$, the number of nodes of weight $i$, satisfies $n_i = \Theta(n/i^{\beta})$, where $\beta > 2$ is a given parameter. Finally, recall that $w_{max} = \max_{v}w_v$, $w_{min} = \min_{v}w_v$, and a common assumption in this setting is $w_{min} = O(1)$.

Take now any random graph $G=(V,E)$ that is produced by the above model. In that graph, we assume that a disease percolation process takes place, and this process is governed by some probability vector $\vec p$. We are interested in bounding the expected number of paths $\Gamma$ in $G(\vec p)$, where the randomness of $\Gamma$ is obviously of both Types 1 and 3. To do so, we start by analyzing the expected number of paths of length $k$ in $G$, where the randomness here is only of Type 1. In what follows, we are using $\ell_k$ to denote the latter quantity.

Our first result is showing that when $\beta > 3$, we have $\ell_k \leq \poly(n, 2^k)$. Furthermore, if $p_e \leq c_0$ for all $e \in E$, where $c_0$ is a universal positive constant, we demonstrate how to utilize the bound on $\ell_k$ and eventually give a polynomial bound on $\Gamma$. 


In addition, when $\beta < 3$ we provide a negative result, indicating that our SAA framework from Section \ref{sec:SAA} cannot be utilized for this case, as no polynomial bound on $\Gamma$ can be guaranteed.

By an abuse of notation, we will let $m$ denote the \textbf{expected} number (not actual number) of edges in the graph $G$. Trivially, $m = \sum_{v \in V}w_v/2$. Since $\beta > 2$, $m$ can also be expressed as:
\begin{align}
  m= \Theta(\sum_i i \cdot n_i) = \Theta\Big(\int_{w_{min}}^{w_{max}} \frac{n}{z^{\beta-1}} dz\Big) = \Theta\Big(\frac{n}{w_{min}^{\beta-2}}\Big) \label{aux-m-bnd} 
\end{align}
The following lemma is required for counting paths.

\begin{lemma}
\label{lemma:SNk}
Fix some path length $k$, and suppose that we are given a positive integer $D \geq w_{min}$. Let also $S(D,k) \doteq \{(a(w_{min}), a(w_{min} + 1), \ldots, a(D)):~(\forall i, ~a(i) \in \mathbb{Z}_{\geq 0})
\mbox{ and } \sum_i a(i) = k\}$\footnote{$S(D,k)$ can be interpreted as the set of all vectors $x \in \mathbb{Z}^{D-w_{min} + 1}_{\geq 0}$ with $\ell^1$ norm equal to $k$. Our notation can be thought of as re-indexing $x$, so that the numbering starts at $w_{min}$ instead of $1$, and then finishes at $D$.}. Then,
\[
\ell_k \leq n \cdot \left(\frac{2^k k!}{m^k} \right) \cdot \sum_{\mathbf{a} \in S(w_{max}, k)} ~\prod_{i = w_{min}}^{w_{max}} \left( {n_i \choose a(i)} \cdot i^{2a(i)} \right). 
\]
\end{lemma}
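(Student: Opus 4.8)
The plan is to obtain the bound by a direct first-moment (linearity-of-expectation) computation, and then to recognize the resulting weighted sum over $k$-subsets of $V$ as exactly the partition-indexed sum on the right-hand side. First I would write $\ell_k = \sum_P \Pr[P\subseteq G]$, where $P$ ranges over all potential simple paths of length $k$ (one term per candidate path on $V$) and $\Pr[P\subseteq G]$ is the probability that all edges of $P$ are realized in the Chung--Lu graph. Because edges are included independently, this probability factorizes over the edges of $P$, so a path through vertices $v_1,\dots,v_k$ contributes $\prod \frac{w_{u}w_{v}}{\sum_r w_r}$. Collecting the weight factors per vertex, each internal vertex sits on two edges and contributes $w_v^2$, whereas the two endpoints sit on one edge and contribute $w_v$.

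Since all weights are positive integers, $w_v\ge 1$, so I can bound each endpoint factor by $w_v\le w_v^2$; after this slack every vertex of $P$ contributes exactly $w_v^2$, and $\Pr[P\subseteq G]\le (\sum_r w_r)^{-(k-1)}\prod_{v\in P} w_v^2$. Summing over the orderings of a fixed $k$-element vertex set introduces a factor of at most $k!$, so $\ell_k \le \frac{k!}{(\sum_r w_r)^{k-1}}\sum_{S\subseteq V,\,|S|=k}\prod_{v\in S} w_v^2$. Now I would clean up the prefactor: using $\sum_r w_r = 2m$ rewrites $\frac{k!}{(2m)^{k-1}} = \frac{2m\cdot k!}{(2m)^{k}}$, and the power-law assumption (cf.\ \eqref{aux-m-bnd}) gives $2m=\sum_v w_v=\Theta(n)$, which turns the leftover $2m$ in the numerator into the leading factor of $n$ and leaves $1/m^k$ after pulling the $2^{-k}$ out. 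All the remaining constant slack --- the $\Theta(1)$ hidden in $2m=\Theta(n)$, the endpoint-squaring, and the ordered-versus-unordered counting --- is comfortably absorbed into the stated $2^k$, yielding $\ell_k \le n\cdot\frac{2^k k!}{m^k}\sum_{|S|=k}\prod_{v\in S} w_v^2$.

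The final and most essential step is the purely combinatorial rewriting of $\sum_{|S|=k}\prod_{v\in S} w_v^2$. I would build a $k$-subset $S$ by first deciding how many of its members come from each weight class: an integer $a(i)\ge 0$ for each $i\in\{w_{min},\dots,w_{max}\}$ with $\sum_i a(i)=k$, which is precisely a vector $\mathbf{a}\in S(w_{max},k)$. Given $\mathbf{a}$, the vertices can be chosen in $\prod_i \binom{n_i}{a(i)}$ ways, and each selected weight-$i$ vertex contributes $i^2$ to $\prod_{v\in S}w_v^2$, i.e.\ a total factor $\prod_i i^{2a(i)}$. Summing over all $\mathbf{a}$ gives the exact identity $\sum_{|S|=k}\prod_{v\in S}w_v^2 = \sum_{\mathbf{a}\in S(w_{max},k)}\prod_{i=w_{min}}^{w_{max}}\binom{n_i}{a(i)}\,i^{2a(i)}$ (weight classes with $a(i)>n_i$ drop out automatically since $\binom{n_i}{a(i)}=0$). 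Substituting this identity into the bound above proves the lemma.

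I expect the main obstacle to be the bookkeeping in the middle paragraph rather than any deep idea: carefully separating endpoint from internal contributions in $\Pr[P\subseteq G]$, making sure that after accounting for $\sum_r w_r = 2m = \Theta(n)$ the surviving denominator is exactly $m^k$ and the leading factor is exactly $n$, and checking that each constant and each $k$-dependent slack really does fit inside the single $2^k$. Once the weighted subset-sum $\sum_{|S|=k}\prod_{v\in S} w_v^2$ has been isolated, the reindexing into $S(w_{max},k)$ is a clean and exact count.
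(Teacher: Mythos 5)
Your proposal is correct and follows essentially the same route as the paper's proof: a first-moment computation in which the path vertices are organized by weight class via a vector $\mathbf{a}\in S(w_{max},k)$, contributing $\binom{n_i}{a(i)}$ choices per class, $k!$ orderings, and a per-edge probability factor, with endpoint and constant slack absorbed into the $2^k$ (the paper obtains the leading $n$ by fixing the start vertex $u_0$ and summing over it, while you obtain it from $\sum_v w_v = 2m=\Theta(n)$; both proofs are equally loose about small constant $k$, where the paper even asserts a per-path ``equality'' that is really an inequality up to a $4^k$-type factor). The one wrinkle is an off-by-one in conventions---your paths have $k$ vertices and $k-1$ edges, whereas the paper's $P=(u_0,\dots,u_k)$ has $k$ edges and $k+1$ vertices---but redoing your computation with $k+1$ vertices, bounding only one endpoint by $w_v\le w_v^2$ and summing the other endpoint's weight to $2m$, produces exactly the same expression $\frac{k!}{(2m)^{k-1}}\sum_{\mathbf{a}\in S(w_{max},k)}\prod_{i}\binom{n_i}{a(i)}\,i^{2a(i)}$ that you derived, so the stated bound follows unchanged.
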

\begin{proof}
We say that a vertex $v$ is in \emph{class} $i$ if $w_v = i$.
Fix now a vertex $v_0$. We will next upper bound how many different paths of length $k$ can start from $v_0$. We can construct such a possible path $P = (u_0, u_1, \ldots, u_k)$ as follows:
\begin{enumerate}
    \item Pick a vector $\mathbf{a} = (a(w_{min}), a(w_{min} + 1), \ldots, a(w_{max}))$ from $S(w_{max}, k)$. This will give us the selection of how many vertices from each degree class we should pick, such that in total we have chosen $k$ vertices for $P$.
    \item For the chosen $\mathbf{a}$, pick $a(i)$ vertices from each degree class $i$, where $w_{min} \leq i \leq w_{max}$. This is possible only if $a(i) \leq n_i$ for each class $i$. However, since we are only computing an upper bound, we will assume that such a selection is always possible. Notice that there may be some additional double counting, because we may end up choosing $v_0$ again. Nonetheless, as we are only concerned with an upper bound, we will permit such ``unecessary'' cases in our counting.
    \item Choose the positions of the chosen $k$ vertices among the indices $\{1, 2, \ldots, k\}$ of the path $P$ to be constructed ($k!$ possibilities).  
    \end{enumerate}
    

Overall, based on the 3 cases above, the total number of paths starting from $v_0$ can be at most:
\begin{align}
    k! \sum_{\mathbf{a} \in S(w_{max}, k)} ~\prod_{i = w_{min}}^{w_{max}} {n_i \choose a(i)}  \label{aux-cnt-1}
\end{align}

Let $d_0$ be the class of $u_0$. Suppose we complete the three steps above and let $(d_0, d_1, \ldots, d_k)$ be the ordered degree sequence obtained for the vertices in $P$, when the chosen vector was $\mathbf{a}$. The probability of such a path materializing in the edge selection phase is 
\begin{align}
\prod_{i=0}^{k-1} \left(\frac{d_i d_{i+1}}{\sum_v w_v} \right)  = \frac{2^k}{m^k} {\prod^{w_{max}}_{i = w_{min}} i^{2a(i)}} \label{aux-cnt-2}
\end{align}

Combining (\ref{aux-cnt-1}) and (\ref{aux-cnt-2}), the expected number of paths of length $k$ starting at $u_0$ is at most
\begin{align}
\left(\frac{2^k k!}{m^k} \right) \cdot \sum_{\mathbf{a} \in S(w_{max}, k)} ~\prod_{i = w_{min}}^{w_{max}} \left( {n_i \choose a(i)} \cdot i^{2a(i)} \right)
\end{align}
Summing this bound over all possible starting vertices results in the claim of the Lemma.
\end{proof}

\subsection{A Positive Result When $\beta > 3$}

We begin this section with a couple of important technical lemmas, and then move on to our final result regarding the expected number of paths in a randomly drawn graph.

\begin{lemma}
\label{lemma:beta>3}
Let $S(D, k)$ be as  in Lemma~\ref{lemma:SNk}, and let
$N(D,k) \doteq \sum_{\mathbf{a} \in S(D, k)} ~\prod_{i = w_{min}}^{D} \frac{1}{i^{c_1 a(i)} \cdot a(i)!}$ for some constant $c_1 > 1$. Then $N(D,k) \leq \frac{1}{k!} \cdot \prod_{i = w_{min} + 1}^D ~\left(1 + \frac{1}{i^{c_1}} \right)^k$. 
\end{lemma}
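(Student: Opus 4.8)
The plan is to recognize that the sum defining $N(D,k)$ is exactly a multinomial expansion in disguise, which collapses it into a clean closed form, after which the lemma reduces to an elementary inequality between a sum and a product.

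First I would apply the multinomial theorem. Writing $x_i = 1/i^{c_1}$ for each class $i \in \{w_{min}, \ldots, D\}$, the summand is $\prod_i \frac{x_i^{a(i)}}{a(i)!}$, since $\frac{1}{i^{c_1 a(i)} \cdot a(i)!} = \frac{x_i^{a(i)}}{a(i)!}$. Summing over all nonnegative integer vectors $\mathbf{a}$ with $\sum_i a(i) = k$ (that is, over $S(D,k)$) and invoking the multinomial theorem $\frac{1}{k!}(\sum_i x_i)^k = \sum_{\sum_i a(i) = k} \prod_i \frac{x_i^{a(i)}}{a(i)!}$ gives
\[
N(D,k) = \sum_{\mathbf{a} \in S(D,k)} \prod_{i=w_{min}}^D \frac{(1/i^{c_1})^{a(i)}}{a(i)!} = \frac{1}{k!}\left(\sum_{i=w_{min}}^D \frac{1}{i^{c_1}}\right)^k.
\]
This is in fact an exact identity, so the whole lemma now reduces to bounding the base $\sum_{i=w_{min}}^D i^{-c_1}$ by the product $\prod_{i=w_{min}+1}^D (1 + i^{-c_1})$.

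The key step is then the elementary inequality $\sum_{i=w_{min}}^D i^{-c_1} \leq \prod_{i=w_{min}+1}^D (1 + i^{-c_1})$, which I would prove by isolating the first term of the sum and expanding the product. Since $w_v$ is a positive integer we have $w_{min} \geq 1$, so the leading term satisfies $w_{min}^{-c_1} \leq 1$, giving $\sum_{i=w_{min}}^D i^{-c_1} \leq 1 + \sum_{i=w_{min}+1}^D i^{-c_1}$. On the other hand, expanding $\prod_{i=w_{min}+1}^D (1 + i^{-c_1})$ produces the constant term $1$, all the singleton terms $i^{-c_1}$, and further cross-terms, each of which is nonnegative; hence $\prod_{i=w_{min}+1}^D (1 + i^{-c_1}) \geq 1 + \sum_{i=w_{min}+1}^D i^{-c_1}$, and chaining the two bounds yields the claim. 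I would raise both sides to the $k$-th power (valid since both sides are nonnegative) and divide by $k!$ to convert the closed form for $N(D,k)$ into the desired upper bound $\frac{1}{k!}\prod_{i=w_{min}+1}^D (1 + i^{-c_1})^k$.

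There is no serious obstacle here: the only genuinely non-obvious move is spotting the multinomial identity, after which everything is routine. The one point I would be careful about is that $N(D,k)$ is defined as a sum over \emph{all} of $S(D,k)$, with no feasibility restriction $a(i) \leq n_i$; this is exactly what lets the multinomial identity apply verbatim. The passage from the exact sum to the (strictly weaker) product form is harmless, and indeed deliberate, since the product is the more convenient quantity to estimate — e.g.\ by bounding its logarithm — in the subsequent $\beta > 3$ analysis building on Lemma~\ref{lemma:SNk}.
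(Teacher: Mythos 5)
Your proof is correct, and it takes a genuinely different route from the paper's. The paper argues by strong induction on $D$: it isolates the contribution $a(D)=j$ of the top weight class to obtain the recurrence $N(D,k)=\sum_{j=0}^{k} N(D-1,k-j)\cdot \frac{1}{D^{c_1 j}\, j!}$, applies the inductive hypothesis, and recombines terms via the binomial sum $\sum_{j=0}^{k}\binom{k}{j}D^{-c_1 j}=\left(1+D^{-c_1}\right)^{k}$, with base case $S(w_{min},k)=\{(k)\}$. You instead observe that, precisely because $S(D,k)$ carries no feasibility restriction $a(i)\leq n_i$, the defining sum is an exact instance of the multinomial theorem, yielding the closed form $N(D,k)=\frac{1}{k!}\bigl(\sum_{i=w_{min}}^{D} i^{-c_1}\bigr)^{k}$; you then relax the base to the product using $w_{min}^{-c_1}\leq 1$ (legitimate, since the paper takes the weights $w_v$ to be positive integers, so $w_{min}\geq 1$) together with the Weierstrass-type inequality $1+\sum_i y_i \leq \prod_i (1+y_i)$ for $y_i\geq 0$, and raise to the $k$-th power, which is monotone on nonnegatives. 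All steps check, including the degenerate case $D=w_{min}$, where your closed form gives $w_{min}^{-c_1 k}/k!\leq 1/k!$, matching the empty product. In effect, the paper's induction re-derives the multinomial identity one class at a time, whereas your argument obtains it in one stroke and is strictly sharper before the final relaxation: the exact identity gives $k!\,N(w_{max},k)=\bigl(\sum_{i\geq w_{min}} i^{-c_1}\bigr)^{k}\leq c^{k}$ for a constant $c$ depending only on $c_1$ (the tail of a convergent $c_1$-series), which could be fed into Lemma~\ref{lemma:ell_k} directly, bypassing the estimate $\prod_{i>w_{min}}(1+i^{-c_1})^{k}\leq e^{k\sum_{i>w_{min}} i^{-c_1}}$ used there.
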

\begin{proof}
We prove by induction on $D$ the stronger statement (\ref{stmt-ind}), which directly implies the Lemma. 
\begin{align}
    \forall k, ~
N(D,k) \leq \frac{1}{k!} \cdot \prod_{i = w_{min} + 1}^D \left(1 + \frac{1}{i^{c_1}} \right)^k \tag{A} \label{stmt-ind}
\end{align}

The base case $D = w_{min}$ is easy. For this notice that $S(w_{min},k) =\{(k)\}$, and hence we have
$$N(w_{min}, k) = \frac{1}{w_{min}^{c_1 k} \cdot k!} \leq \frac{1}{k!}$$
The inequality above follows since $c_1 > 1$.

We complete the proof by strong induction. Suppose $D > w_{min}$. Elementary calculations and the definition of $N(\cdot, \cdot)$ reveal the following recurrence when $D > w_{min}$
\begin{equation}
\label{eqn:N-recur}    
N(D,k) = \sum_{j=0}^k \left( N(D-1, k-j) \cdot \frac{1}{D^{c_1 j} \cdot j!} \right)
\end{equation}
Recurrence (\ref{eqn:N-recur}) and the induction hypothesis yield
\begin{align}
N(D,k) &\leq  \sum_{j=0}^k \left( \frac{1}{D^{c_1 j} \cdot j!} \cdot \frac{1}{(k - j)!} \prod_{i = w_{min} + 1}^{D-1} \left(1 + \frac{1}{i^{c_1}} \right)^{k-j} \right) \notag \\
&\leq \sum_{j=0}^k \left( \frac{1}{D^{c_1 j} \cdot j!} \cdot \frac{1}{(k - j)!}  \prod_{i = w_{min} + 1}^{D-1} \left(1 + \frac{1}{i^{c_1}} \right)^{k} \right) \notag \\
&=  \left( \prod_{i = w_{min} + 1}^{D-1} \left (1 + \frac{1}{i^{c_1}} \right)^{k} \right) \sum_{j=0}^k \left( \frac{1}{D^{c_1 j} \cdot j!} \cdot \frac{1}{(k - j)!}  \right) \notag \\
&=  \frac{1}{k!} \cdot \left( \prod_{i = w_{min} + 1}^{D-1} \left (1 + \frac{1}{i^{c_1}} \right)^{k} \right) \cdot \sum_{j=0}^k \left({k \choose j} \cdot \frac{1}{D^{c_1 j}} \right) \notag \\
&=  \frac{1}{k!} \cdot \left( \prod_{i = w_{min} + 1}^{D} \left (1 + \frac{1}{i^{c_1}} \right)^{k} \right)  \notag
\end{align}
The last inequality above follows from the binomial sum $\sum_{j=0}^k \left({k \choose j} \cdot \frac{1}{D^{c_1 j}} \right) = (1+ 1/D^{c_1})^k$.
\end{proof}

\begin{lemma}
\label{lemma:ell_k}
Suppose $\beta = 2 + c_1$ for some constant $c_1 > 1$. Then, for all $k$, $\ell_k \leq \poly(n, 2^k)$. 
\end{lemma}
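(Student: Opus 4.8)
The plan is to feed the bound of Lemma~\ref{lemma:SNk} into Lemma~\ref{lemma:beta>3}, after rewriting its inner product so that it matches the summand defining $N(\cdot,\cdot)$. The key algebraic observation is that, since $\beta = 2 + c_1$ and $n_i = \Theta(n/i^\beta)$, each factor obeys
\[
\binom{n_i}{a(i)} \cdot i^{2a(i)} \le \frac{n_i^{a(i)}}{a(i)!}\, i^{2a(i)} \le \frac{(Cn)^{a(i)}}{i^{(2+c_1)a(i)}\,a(i)!}\, i^{2a(i)} = (Cn)^{a(i)} \cdot \frac{1}{i^{c_1 a(i)}\, a(i)!}
\]
for a suitable absolute constant $C$ absorbing the $\Theta(\cdot)$ in $n_i$. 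Because every $\mathbf{a} \in S(w_{max},k)$ has $\sum_i a(i) = k$, taking the product over $i$ pulls out a clean factor $(Cn)^k$ and leaves precisely the summand of $N(w_{max},k)$.

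First I would therefore conclude
\[
\sum_{\mathbf{a}\in S(w_{max},k)} \prod_{i=w_{min}}^{w_{max}} \Big(\binom{n_i}{a(i)} i^{2a(i)}\Big) \le (Cn)^k \, N(w_{max},k) \le \frac{(Cn)^k}{k!} \prod_{i=w_{min}+1}^{w_{max}} \Big(1 + \frac{1}{i^{c_1}}\Big)^k,
\]
where the second inequality is Lemma~\ref{lemma:beta>3}. Substituting this into Lemma~\ref{lemma:SNk}, the factor $k!$ in its numerator cancels against the $1/k!$ above, yielding
\[
\ell_k \le n \cdot (2C)^k \cdot \frac{n^k}{m^k} \cdot \Big(\prod_{i=w_{min}+1}^{w_{max}} \big(1 + i^{-c_1}\big)\Big)^k.
\]

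It then remains to verify that every base carrying an exponent $k$ is bounded by an absolute constant, so that the whole right-hand side is $n$ times a constant raised to the power $k$. By equation~(\ref{aux-m-bnd}) together with $w_{min} = O(1)$ and $\beta - 2 = c_1$ constant, we have $m = \Theta(n)$, hence $n/m = \Theta(1)$. The crucial point — and the only place the hypothesis $c_1 > 1$ is genuinely used — is that $\sum_{i \ge w_{min}+1} i^{-c_1}$ converges to a constant, so that $\prod_{i}(1 + i^{-c_1}) \le \exp\big(\sum_i i^{-c_1}\big)$ is also an absolute constant. Collecting these bounds gives $\ell_k \le n \cdot K^k$ for an absolute constant $K$, and since $K^k = (2^k)^{\log_2 K}$ this is exactly $\poly(n,2^k)$, as claimed. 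I expect the main (though mild) obstacle to be the bookkeeping: making sure each of the several constant-to-the-$k$ factors ($C^k$ from the binomial estimate, $(n/m)^k$, and the product term) is genuinely $\poly(2^k)$, all of which rests on $c_1 > 1$ forcing a convergent tail sum.
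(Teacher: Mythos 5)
Your proposal is correct and follows essentially the same route as the paper's proof: bound $\binom{n_i}{a(i)} \le n_i^{a(i)}/a(i)!$, use $n_i = O(n/i^{2+c_1})$ to extract an $(Cn)^k$ factor via $\sum_i a(i)=k$, reduce the remaining sum to $N(w_{max},k)$, invoke Lemma~\ref{lemma:beta>3} so the $k!$ cancels, and bound the product $\prod_i(1+i^{-c_1})^k$ by $e^{O(k)}$ using the convergence of $\sum_i i^{-c_1}$ for $c_1>1$, together with $n/m = O(1)$. Your bookkeeping is in fact slightly cleaner than the paper's (whose displayed chain momentarily drops the $2^k$ factor before absorbing it into $\poly(n,2^k)$), but there is no substantive difference in approach.
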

\begin{proof}
Before we proceed to our main arguments, we make some useful observations and give a bit more notation. At first, using (\ref{aux-m-bnd}) and the assumption that $w_{min} = O(1)$, we see that $\frac{n}{m} = O(1)$. Furthermore, because $n_i = \Theta(\frac{n}{i^\beta})$ for every $i \in [w_{min}, w_{max}]$, let $\lambda$ be a universal constant such that $n_i \leq \frac{\lambda n}{i^\beta}$ for every $i$. Using Lemma \ref{lemma:SNk} we get
\begin{eqnarray*}
\ell_k & \leq & n \cdot \left(\frac{2^k k!}{m^k} \right) \cdot \sum_{\mathbf{a} \in S(w_{max}, k)} ~\prod_{i = w_{min}}^{w_{max}} \left( \frac{n_i^{a(i)}}{a(i)!} \cdot i^{2a(i)} \right) \\
& \leq & n \cdot \left(\frac{2^k k!}{m^k} \right) \cdot \sum_{\mathbf{a} \in S(w_{max}, k)} ~\prod_{i = w_{min}}^{w_{max}} \left( \frac{(\lambda n/i^{2 + c_1})^{a(i)}}{a(i)!} \cdot i^{2a(i)} \right) \\
& \leq & n \cdot k! \cdot \left(\frac{\lambda n}{m}\right)^k \cdot \sum_{\mathbf{a} \in S(w_{max}, k)} ~\prod_{i = w_{min}}^{w_{max}} \left( \frac{(1/i^{2 + c_1})^{a(i)}}{a(i)!} \cdot i^{2a(i)} \right) \\
& = & n \cdot k! \cdot \left(\frac{\lambda n}{m}\right)^k \cdot \sum_{\mathbf{a} \in S(w_{max}, k)} ~\prod_{i = w_{min}}^{w_{max}} \frac{1}{i^{c_1 a(i)} \cdot a(i)!} \\
& = & \poly(n, 2^k) \cdot k! \cdot N(w_{max},k).
\end{eqnarray*}

Using the bound on $N(D, k)$ from Lemma~\ref{lemma:beta>3}, we have
\begin{align}
\ell_k & \leq \poly(n, 2^k) \cdot \prod_{i = w_{min} + 1}^{\infty} \left (1 + \frac{1}{i^{c_1}} \right)^{k}  \leq\poly(n, 2^k) \cdot \prod_{i = w_{min} + 1}^{\infty} e^{\frac{k}{i^{c_1}}} \notag \\ & = \poly(n, 2^k) \cdot e^{k \cdot \sum_{i > w_{min}} \frac{1}{i^{c_1}}} \leq \poly(n, 2^k) \notag
\end{align}
The last inequality follows because $\sum_{i > w_{min}} (1/i^{c_1}) = O(1)$ when $c_1 > 1$. 
\end{proof}

\begin{corollary}\label{CL-bound}
Let $G$ be a graph drawn from the Chung-Lu distribution with power law weights, with parameter $\beta=2+c_1$ for some constant $c_1>1$. Then there is a constant $c_0 > 0$ that depends only on $c_1$, such that the following holds: if the probability $p_e$ of retaining edge $e$ during the disease percolation process satisfies $p_e \leq c_0$ for any $e$, then the expected number $\Gamma$ of paths in $G(\vec p)$ is upper-bounded by  $\poly(n)$. (This expectation is over randomness of types 1 and 3.) 
\end{corollary}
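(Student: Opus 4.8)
The plan is to combine the length-by-length path count $\ell_k$ from Lemma~\ref{lemma:ell_k} with the observation that percolation kills long paths exponentially fast, and then to sum a convergent geometric series. The only thing that makes this more than a one-line corollary is that I need the bound on $\ell_k$ in a form that separates the $n$-dependence from the $k$-dependence, with the exponential-in-$k$ base being a genuine constant determined by $c_1$.

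First I would make the bound of Lemma~\ref{lemma:ell_k} explicit in its dependence on $k$. Tracing its proof, the factor multiplying the polynomial-in-$n$ term is $(2\lambda n/m)^k \cdot \big(k!\, N(w_{max},k)\big)$, and since $n/m = O(1)$ and, by Lemma~\ref{lemma:beta>3}, $k!\, N(w_{max},k) \leq \prod_{i > w_{min}}\big(1 + i^{-c_1}\big)^k \leq e^{k \sum_{i > w_{min}} i^{-c_1}}$ with $\sum_{i > w_{min}} i^{-c_1} = O(1)$ (recall $c_1 > 1$), one obtains $\ell_k \leq \poly(n) \cdot C^k$ for a constant $C = C(c_1) \geq 1$ that depends only on $c_1$ (through $\lambda$, the ratio $n/m$, and the convergent zeta-type sum) and on no parameter that grows with $n$ or $k$.

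Next I would relate $\Gamma$ to the $\ell_k$'s. Conditioning first on the realized Chung--Lu graph $G$ (Type~1) and then on the percolation (Type~3), any fixed simple path $P$ of length $k$ present in $G$ survives with probability $\prod_{e \in P} p_e \leq c_0^k$. By linearity of expectation, the expected number of length-$k$ paths surviving in $G(\vec p)$ is at most $c_0^k \cdot \mathbb{E}_G[\#\{\text{length-}k\text{ paths in }G\}] = c_0^k \ell_k$. Summing over the only feasible lengths $k = 1, \ldots, n-1$ (a simple path spans at most $n$ vertices),
\[
\Gamma \leq \sum_{k=1}^{n-1} c_0^k\, \ell_k \leq \poly(n) \sum_{k=1}^{n-1} (C c_0)^k .
\]
I would then fix the universal constant to be $c_0 = 1/(2C)$, which depends only on $c_1$; this gives $C c_0 = 1/2 < 1$, so the geometric sum is bounded by $\sum_{k \geq 1} 2^{-k} = 1$, and hence $\Gamma \leq \poly(n)$, as claimed.

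The main obstacle is the first step: the \emph{statement} of Lemma~\ref{lemma:ell_k} only asserts the qualitative bound $\poly(n, 2^k)$, whereas here I crucially need the separated form $\poly(n) \cdot C^k$ with base $C$ a true $O(1)$ constant depending only on $c_1$, so that $c_0$ can be fixed independently of $n$. Verifying that the polynomial-in-$n$ prefactor carries no hidden $k$-dependence, and that $C$ absorbs \emph{all} of the exponential-in-$k$ growth while staying bounded, is the only delicate point; once $C$ is pinned down, the per-path survival bound $c_0^k$ and the geometric summation are entirely routine.
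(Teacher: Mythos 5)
Your proposal is correct and follows essentially the same route as the paper: the paper likewise reads off $\ell_k \leq n^C 2^{Ck}$ from Lemma~\ref{lemma:ell_k}, sets $c_0 = 2^{-C}$, and bounds $\Gamma \leq \sum_k \ell_k c_0^k \leq n^{C+1}$, just as you do with $c_0 = 1/(2C)$ and a geometric sum. Your one worry---that the lemma's statement is too qualitative to yield the separated form $\poly(n)\cdot C^k$---is unfounded given the paper's convention that $\poly$ denotes a bivariate polynomial, so $\ell_k \leq n^C 2^{Ck}$ follows directly from the statement and your re-tracing of the proof of Lemma~\ref{lemma:ell_k}, while sound, is unnecessary.
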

\begin{proof}
From Lemma~\ref{lemma:ell_k}, we have $\ell_k\leq \poly(n, 2^k)$; let $C$ be a constant such that $\ell_k\leq n^C 2^{Ck}$ for all $k$. We choose $c_0 = 2^{-C}$. Then, when $p_e \leq c_0$ for every $e$, the probability that a given path of length $k$ survives in $G(\vec p)$ is at most $c_0^k$. Therefore,
the expected number of paths in $G(\vec p)$ is
\begin{align}
\Gamma \leq \sum_k \ell_k c_0^k\leq \sum_k n^C (c_0 2^C)^k \leq n^{C+1} \notag &\qedhere
\end{align}
\end{proof}

Combining Corollaries \ref{CL-bound} and \ref{SAA-final}, we get a bicriteria approximation algorithm for \mininfprobCL{}.

\subsection{A Negative Result When $\beta < 3$}

We now consider the case $\beta<3$ and show an interesting contrast to Lemma~\ref{lemma:ell_k}.

\begin{lemma}\label{CL-neg}
When $\beta = 2 + c_0$ for some constant $c_0 < 1$, there may exist $k$ with $\ell_k=\omega(poly(n, 2^k))$.
\end{lemma}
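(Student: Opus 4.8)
The plan is to prove the negative result by exhibiting an explicit family of long paths that live among the highest-weight vertices, and whose combinatorial multiplicity (a factorial number of orderings) overwhelms the edge-probability penalty precisely when $\beta < 3$. Concretely, I would fix a path length $k = \Theta(n^{1/\beta})$, restrict attention to paths whose $k+1$ vertices all lie in the set $U$ of the $k+1$ highest-weight vertices, and show $\ell_k \geq (k!)^{\Omega(1)}$, which is $\omega(\poly(n,2^k))$.

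First I would pin down the weight of this ``heavy core''. Using the power-law counting $n_i = \Theta(n/i^\beta)$, the number of vertices of weight at least $x$ is $\Theta(n/x^{\beta-1} - n/w_{max}^{\beta-1})$, where $w_{max} = \Theta(n^{1/\beta})$. I would choose a threshold $x = \Theta(n^{1/\beta})$ a small constant factor below $w_{max}$, so that at least $k+1 = \Theta(n^{1/\beta})$ vertices have weight $\ge x$; this set is $U$. I would also record, from (\ref{aux-m-bnd}) and $w_{min}=O(1)$, that $m=\Theta(n)$, hence $\sum_r w_r = 2m = \Theta(n)$.

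Next comes the counting lower bound. The vertices of $U$ are distinct, so any ordering of them yields a Hamiltonian path using $k$ distinct edges, which appear independently; each such path is realized with probability at least $(x^2/(2m))^k$, since $q_{u,v} = w_u w_v/\sum_r w_r \ge x^2/(2m)$ for $u,v\in U$. There are at least $k!$ distinct such simple paths of length $k$, so linearity of expectation gives
\[
\ell_k \ \ge\ k!\left(\frac{x^2}{2m}\right)^k .
\]
Substituting $x=\Theta(n^{1/\beta})$ and $m=\Theta(n)$ gives $x^2/(2m)=\Theta(n^{2/\beta-1})$, and taking logarithms with Stirling's bound $\log k! \ge k\log k - k$ and $\log k = \tfrac1\beta\log n - O(1)$ yields
\[
\log\ell_k \ \ge\ \frac{k}{\beta}\log n\,(1-o(1)) \;+\; k\Big(\tfrac{2}{\beta}-1\Big)\log n \;-\; O(k) \ =\ \frac{3-\beta}{\beta}\,k\log n\,(1-o(1)).
\]
Since $\beta<3$ the coefficient $(3-\beta)/\beta$ is a positive constant, so $\log\ell_k=\Omega(n^{1/\beta}\log n)$, whereas $\log\poly(n,2^k)=C\log n+O(n^{1/\beta})$ for any constant $C$; the extra $\log n$ factor forces $\ell_k=\omega(\poly(n,2^k))$, and equivalently $\ell_k\ge (k!)^{(3-\beta)-o(1)}=(k!)^{\Omega(1)}$, matching the phase-transition claim.

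The main obstacle is the bookkeeping in the first step: one must verify that a positive constant fraction of the mass near the top of the degree sequence really supplies $\Theta(n^{1/\beta})$ vertices of weight $\Theta(n^{1/\beta})$. Naively solving $n/x^{\beta-1}=k$ with $k=n^{1/\beta}$ returns $x=w_{max}$, yet only $\Theta(1)$ vertices attain weight $\Theta(w_{max})$; the fix is to keep the subtraction term $-\,n/w_{max}^{\beta-1}$ in the tail count and take $x$ a constant factor below $w_{max}$, which guarantees enough vertices. Once $U$ is correctly sized, the rest is the routine Stirling estimate above, whose only delicate point is confirming that the factorial gain $\tfrac{k}{\beta}\log n$ strictly exceeds the edge-probability loss $k\tfrac{\beta-2}{\beta}\log n$ — exactly the inequality $\beta<3$.
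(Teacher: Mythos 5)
Your proof is correct, but it takes a genuinely different route from the paper's. The paper stays inside the machinery of Lemma~\ref{lemma:SNk}: it asserts that the upper bound proved there is tight up to low-order terms (so that $\ell_k$ is $\Theta$ of the counting formula), sets $w_{min}=1$ and $w_{max}=k$, and isolates the single vector $\mathbf{a}=(1,1,\ldots,1)$ --- one vertex from \emph{each} weight class $i=1,\ldots,k$ --- which yields $\ell_k \geq \poly(n,2^k)\cdot k!\cdot\prod_{i=1}^{k} i^{-c_0} = \poly(n,2^k)\,(k!)^{1-c_0}$. You instead build your paths entirely inside the heavy core $U$ of the $k+1=\Theta(n^{1/\beta})$ heaviest vertices and lower-bound $\ell_k$ directly by a first-moment computation: at least $k!$ distinct simple paths, each present with probability at least $(x^2/(2m))^k$, with Stirling showing the factorial gain beats the edge-probability loss exactly when $\beta<3$. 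Your version is more self-contained in one important respect: \emph{any} explicit family of simple paths lower-bounds $\ell_k$ by linearity of expectation, so you never need the paper's unproven assertion that the upper bound of Lemma~\ref{lemma:SNk} is tight up to $\Theta$ --- arguably the weakest step of the published proof (though that step, too, could be repaired by noting that the single term $\mathbf{a}=(1,\ldots,1)$ already counts a legitimate path family, up to reversal and the $v_0$-collision corrections). The price you pay is the tail-count bookkeeping --- choosing the threshold $x$ a constant factor below $w_{max}=\Theta(n^{1/\beta})$ and retaining the subtraction term so that $|U|=\Theta(n^{1/\beta})$ --- which you identify and handle correctly, plus a commitment to the specific scale $k=\Theta(n^{1/\beta})$; the paper's class-by-class choice sidesteps this, since every class $i\leq k=w_{max}$ contains at least one vertex. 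Reassuringly, the two arguments produce the same exponent, $(k!)^{1-c_0}=(k!)^{3-\beta}$, and hence the same phase transition at $\beta=3$.
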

\begin{proof}
In the proof of Lemma \ref{lemma:SNk} we gave an upper bound for $\ell_k$. However, the double counting or the unnecessary cases we involved in our counting can only account for low-order terms. In other words, we can assume 
\begin{align}
    \ell_k = \Theta \left( n \cdot \left(\frac{2^k k!}{m^k} \right) \cdot \sum_{\mathbf{a} \in S(w_{max}, k)} ~\prod_{i = w_{min}}^{w_{max}} \left( {n_i \choose a(i)} \cdot i^{2a(i)} \right) \right) \label{aux-lst}
\end{align}

Consider the case where $w_{min} = 1$ and $w_{max} = k$, and just take the one sequence $\mathbf{a} = (1, 1, \ldots, 1)$. Furthermore, because $n_i = \Theta(\frac{n}{i^\beta})$ for every $i \in [w_{min}, w_{max}]$, let $\lambda$ be a universal constant such that $n_i \geq \frac{\lambda n}{i^\beta}$ for every $i$. Since $a(i) = 1$ for all $i$ here, the quantity inside the $\Theta$ notation in (\ref{aux-lst}), which we denote by $Q$, can be lower-bounded as follows
\begin{align}
Q & \geq n \cdot \left( \frac{2^k k!}{m^k} \right) \cdot \prod_{i = w_{min}}^{w_{max}} \left( n_i \cdot i^{2} \right)  \geq n \cdot \left(\frac{2^k k!}{m^k} \right) \cdot \prod_{i = 1}^{k} \left( (\lambda n/i^{\beta}) \cdot i^{2} \right) \notag \\
&= n \cdot \left(\frac{2^k \lambda^k n^k k!}{m^k} \right) \cdot \prod_{i = 1}^{k} i^{- c_0} = n \cdot \left(\frac{2^k \lambda^k n^k}{m^k} \right) \cdot (k!)^{1 - c_0} = \poly(n, 2^k) \cdot (k!)^{1 - c_0} \notag
\end{align}
Because $c_0 < 1$, we have that $(k!)^{1 - c_0}$ grows faster than $\poly(2^k)$. Hence, $Q = \omega(\poly(n,2^k))$ and consequently $\ell_k = \omega(\poly(n,2^k))$.
\end{proof}

Using reasoning similar to that used in Corollary \ref{CL-bound}, we see that Lemma \ref{CL-neg} implies that under this stochastic regime, our proof approach cannot provide meaningful results for \mininfprobCL{}. 
Thus we see a \emph{phase transition} for the expected number of paths of any length $k$: from at least $(k!)^{\Omega(1)}$ to $\poly(n,2^k)$ at $\beta = 3$. It is an open question as to what happens when $\beta = 3$.

\section{A Deterministic Rounding SAA Approach}
\label{sec:saa-general}

In this section we revisit the SAA approach of Section \ref{sec:SAA}, and instead of a randomized rounding, we apply a simple deterministic rounding scheme. The advantage of the latter is that the success probability of the algorithm no longer relies on the value $\Gamma$. However, this comes at the expense of much worse bicriteria factors. 

Once again we are going to sample $N=\frac{3n}{\epsilon^2} \log \big{(}n^2 \cdot 2^{m+1}\big{)}$ graphs $G_j = (V,E_j)$ from $G(\vec p)$, and then construct LP (\ref{LP-1})-(\ref{LP-4}). Let $(x,y)$ be the optimal fractional solution of the LP. In this case, our returned solution will be $F_0=\{e \in E: x_e\geq \frac{1}{4n^{2/3}}\}$. Before we proceed with our analysis, let us recall some important notation from Section \ref{sec:SAA}. For any fixed $F \subseteq E$, $h(G_j,F) = inf(V,E_j \setminus F,s)$ and $h(G, F) = \frac{1}{N}\sum_{j=1}^N h(G_j, F)$. Finally, $F^*$ denotes the optimal edge set for the given instance of \mininfprob{}, and $\hat F$ denotes the optimal integral solution of LP (\ref{LP-1})-(\ref{LP-4}).

\begin{theorem}
With high probability, $F_0$ is an $(O(n^{2/3}), O(n^{2/3}))$-approximation for \mininfprob{}.
\end{theorem}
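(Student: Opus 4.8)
The plan is to mirror the proof of Theorem~\ref{thm:main}, but to replace the randomized-rounding analysis of the path-covering event $\mathcal{A}_1$ with a purely \emph{structural} bound that ties the reachable set produced by the threshold rounding directly to the LP objective (this is what removes the dependence on $\Gamma$). Throughout write $\tau = \frac{1}{4n^{2/3}}$, so that $F_0 = \{e : x_e \ge \tau\}$ and $1/\tau = 4n^{2/3}$. The budget criterion is immediate and deterministic: every $e \in F_0$ has $x_e \ge \tau$, hence $c(F_0) = \sum_{e \in F_0} c_e \le \frac{1}{\tau}\sum_{e \in F_0} c_e x_e \le \frac{1}{\tau}\sum_{e \in E} c_e x_e \le \frac{B}{\tau} = 4n^{2/3}B$ by constraint (\ref{LP-3}); no randomness enters, so this holds with probability $1$.

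The heart of the argument is a per-sample bound $|R(j)| \le O(n^{2/3})\sum_{v}(1-y_{vj})$, where $R(j)$ is the set of vertices reachable from $s$ in $(V,E_j \setminus F_0)$ and hence counted by $h(G_j,F_0)$. Recall that the optimal LP sets $y_{vj} = \min\{1, d_j(s,v)\}$, where $d_j(s,v)$ is the shortest-path distance from $s$ to $v$ in $G_j$ under edge lengths $x_e$, and that every edge of $G_j \setminus F_0$ has length $x_e < \tau$. I would split $R(j) = A \cup C$ with $A = \{v \in R(j): y_{vj} \le \tfrac12\}$ and $C = \{v \in R(j): y_{vj} > \tfrac12\}$. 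Each vertex of $A$ contributes at least $\tfrac12$ to $\sum_v(1-y_{vj})$, so $|A| \le 2\sum_v(1-y_{vj})$. For $C$ the key is a ``stem'' observation: if $C \ne \emptyset$, pick any $w \in C$ and any $s$--$w$ path $P_w$ inside $G_j \setminus F_0$; since $d_j(s,w) > \tfrac12$ while every edge of $P_w$ has length $< \tau$, the prefix of $P_w$ that first accumulates length $\tfrac12$ must contain more than $\frac{1}{2\tau} = 2n^{2/3}$ distinct vertices $u$ (distinct because $P_w$ is simple), each with $d_j(s,u) < \tfrac12$ and hence $1 - y_{uj} > \tfrac12$. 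Thus the mere existence of a far reachable vertex forces $\sum_v(1-y_{vj}) > n^{2/3}$, whence $|C| \le n \le n^{1/3}\sum_v(1-y_{vj})$. Combining the two pieces gives $|R(j)| \le O(n^{2/3})\sum_v(1-y_{vj})$ for every $j$ (in fact $O(n^{1/3})$, which already suffices).

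Averaging this over $j \in [N]$ turns the per-sample inequality into $h(G,F_0) = \frac1N\sum_j |R(j)| \le O(n^{2/3})\cdot \mathrm{LPval}$, and then, exactly as in the chain (\ref{aux-l1}) of Theorem~\ref{thm:main}, I would use that the LP value lower-bounds $h(G,\hat F)$ and that $\hat F$ minimizes $h(G,\cdot)$ over budget-feasible sets, so $\mathrm{LPval} \le h(G,\hat F) \le h(G,F^*)$. Hence $h(G,F_0) \le O(n^{2/3})\, h(G,F^*)$, a statement about the sampled graphs alone.

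Finally I would invoke Lemma~\ref{lemma:conc}. Let $\mathcal{E}$ be its event (probability $\ge 1 - 1/n^2$ over the sampling of the $G_j$) that $h(G,F) \in [(1-\epsilon)\mathbb{E}[h(G,F)],(1+\epsilon)\mathbb{E}[h(G,F)]]$ \emph{simultaneously for all} $F \subseteq E$; this ``for all $F$'' is precisely what licenses applying it to the sample-dependent set $F_0$. Writing $g(F) = \mathbb{E}[h(G,F)] = \mathbb{E}_{G(\vec p)}[inf(V,E(\vec p)\setminus F,s)]$, so that $g(F^*)=OPT$, on $\mathcal{E}$ we obtain $g(F_0) \le \frac{1}{1-\epsilon}h(G,F_0) \le \frac{O(n^{2/3})}{1-\epsilon}h(G,F^*) \le \frac{1+\epsilon}{1-\epsilon}O(n^{2/3})\,g(F^*) = O(n^{2/3})\,OPT$. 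Together with the deterministic budget bound this makes $F_0$ an $(O(n^{2/3}),O(n^{2/3}))$-approximation with probability $\ge 1-1/n^2$, and in contrast to Theorem~\ref{thm:main} the failure probability no longer involves $\Gamma$. I expect the main obstacle to be the per-sample infection bound, i.e.\ making the stem argument fully rigorous and checking that the single threshold $\tau = \frac{1}{4n^{2/3}}$ balances both criteria; the budget bound and the concentration-based chaining are routine and essentially identical to the proof of Theorem~\ref{thm:main}.
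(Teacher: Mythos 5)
Your proof is correct, and it reorganizes the paper's argument in a genuinely different (and slightly sharper) way. The kernel is the same in both: the observation that any surviving long path consists of edges with $x_e < \frac{1}{4n^{2/3}}$, so that its prefix, before accumulating $x$-length $\frac12$, passes through more than $2n^{2/3}$ distinct vertices whose $y$-values are below $\frac12$ by constraint~(\ref{LP-2}) applied to prefix subpaths (your ``stem''), while short prefixes accumulating length $\frac12$ must contain an edge with $x_e \geq \frac{1}{4n^{2/3}}$, hence in $F_0$. The difference is in the decomposition: the paper splits at the \emph{sample} level, conditioning on whether $h(G,\hat F) \leq n^{1/3}$, partitioning samples into $A' = \{j : \sum_v(1-y_{vj}) \leq n^{2/3}\}$ and $A''$, bounding $|A''| \leq N/n^{1/3}$ by a Markov-type count, and paying an additive $n^{2/3}$ that it converts to multiplicative via $h(G,\hat F)\geq 1$; you instead split at the \emph{vertex} level within every sample, using the stem observation to show that the mere existence of a reachable vertex with $y_{vj} > \frac12$ forces $\sum_v(1-y_{vj}) > n^{2/3}$, which yields the uniform per-sample inequality $h(G_j,F_0) \leq (2+n^{1/3})\sum_v(1-y_{vj})$ with no case analysis and no additive slack. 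Your route buys two things: a cleaner averaging step ($h(G,F_0) \leq O(n^{1/3})\cdot\mathrm{LPval} \leq O(n^{1/3})\,h(G,\hat F)$, with no need for $h(G,\hat F)\geq 1$), and a quantitatively stronger conclusion---an $O(n^{1/3})$ infection ratio where the paper's bookkeeping gives $2n^{2/3}$---so you actually establish an $(O(n^{2/3}), O(n^{1/3}))$-approximation, of which the stated theorem is a weakening. Your remaining steps match the paper exactly: the deterministic budget bound via constraint~(\ref{LP-3}), the chain $\mathrm{LPval} \leq h(G,\hat F) \leq h(G,F^*)$, and the invocation of Lemma~\ref{lemma:conc}, where you correctly note that the uniformity over all $F \subseteq E$ is what permits applying it to the sample-dependent set $F_0$; also note that your use of $y_{vj} \leq d_j(s,v)$ needs only LP feasibility, not the optimality claim $y_{vj} = \min\{1, d_j(s,v)\}$ you state, so that remark is harmless.
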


\begin{proof}
To begin with, by the definition of $F_0$ and constraint (\ref{LP-3}), we have
$$\sum_{e \in F_0} c_e \leq 4n^{2/3}\sum_{e \in F_0} c_e x_e \leq 4 n^{2/3} B$$

Moving forward, note that by Lemma \ref{lemma:conc}, we have $h(G, F^*)\leq (1+\epsilon)\mathbb{E}[h(G, F^*)]$ and $h(G, F_0)\geq (1-\epsilon)\mathbb{E}[h(G, F_0)]$ with probability at least $1-O(1/n^2)$. If we show that $h(G,F_0) \leq 2 n^{2/3} h(G,\hat F)$, then we are done. This is because:
\begin{align}
    \mathbb{E}[h(G,F_0)] \leq  \frac{h(G,F_0)}{1-\epsilon} \leq \frac{2 n^{2/3}}{1-\epsilon} h(G,\hat F) \leq \frac{2 n^{2/3}}{1-\epsilon} h(G, F^*) \leq \frac{2(1+\epsilon) n^{2/3}}{1-\epsilon} \mathbb{E}[h(G, F^*)] \notag
\end{align}

At first, suppose $h(G,\hat F) > n^{1/3}$. Since $h(G_j, F_0) \leq n \leq n^{2/3} h(G,\hat F)$ for any $j$, $h(G,F_0) \leq 2 n^{2/3} h(G,\hat F)$ follows trivially through the definition of $h(G,F_0)$.

Next, suppose $h(G,\hat F) \leq n^{1/3}$. This implies $\frac{1}{N}\sum_{j\in [N]}\sum_{v \in V} (1-y_{vj}) \leq h(G,\hat F)\leq n^{1/3}$, because the optimal LP-value is a lower bound for $h(G,\hat F)$. Let now $A' = \{j \in [N]: \sum_{v \in V} (1-y_{jv}) \leq n^{2/3}\}$ and $A''= [N] \setminus A' = \{j \in [N]: \sum_{v \in V} (1-y_{vj}) > n^{2/3}\}$. The upper bound of the optimal fractional solution value then gives $|A''|\leq N/n^{1/3}$. Consider now any $j\in A'$, and let $v$ be a node such that $1-y_{vj} \leq 1/2$. We will argue below that for any path $P\in\mathcal{P}(s, v, G_j)$, there exists an edge $e\in P$ such that $e\in F_0$. This means that if $v$ is infected in $(V, E_j \setminus F_0)$, then $1-y_{vj} > 1/2$, and so $h(G_j, F_0) \leq \sum_v 2(1-y_{vj})$. Hence,
\begin{align*}
h(G, F_0) &= \frac{1}{N}\sum_{j \in A'}h(G_j, F_0) + \frac{1}{N}\sum_{j \in A''}h(G_j, F_0) \\
&\leq \frac{1}{N}\sum_{j\in A'} \sum_{v \in V} 2(1-y_{vj}) + \frac{n|A''|}{N} \\
&\leq \frac{1}{N}\sum_{j\in [N]} \sum_{v \in V} 2(1-y_{vj}) +  n^{2/3} \\
&\leq \frac{1}{N}\sum_{j\in [N]} \sum_{v \in V} 2(1-y_{vj}) +  n^{2/3} h(G,\hat F) \\
&\leq (2 + n^{2/3})h(G,\hat F) \leq 2 n^{2/3}h(G,\hat F)
\end{align*}
where the third inequality follows because $h(G_j, \hat F) \geq 1$, and thus $h(G, \hat F) \geq 1$.

Finally, we prove that for any $j\in A'$, and any $v$ such that $1-y_{vj}\leq 1/2$, it must be the case that for each $P\in\mathcal{P}(s, v, G_j)$ we have $P\cap F_0\neq\emptyset$. Let $P=(v_0,v_1,\ldots,v_r)$ with $v_0=s$ be such a path of $\mathcal{P}(s, v, G_j)$. First, suppose $|P|=r\leq 2n^{2/3}$. Then, constraint (\ref{LP-2}) yields $\sum_{e\in P} x_e \geq 1/2$, and hence there exists $e\in P$ with $x_e \geq 1/(2|P|) \geq 1/(4n^{2/3})$, which implies $e\in F_0$. Next, suppose $|P| > 2n^{2/3}$. Let $P'=(v_0,v_1,\ldots,v_k)$ be the prefix of $P$ of length $k=2n^{2/3}$. We will show that $P'\cap F_0\neq\emptyset$, which implies $P\cap F_0\neq\emptyset$. By definition of $A'$, we have $\sum_{i=0}^k (1-y_{v_ij}) \leq n^{2/3}$. Since $k= 2n^{2/3}$, there exists some $1\leq \ell\leq k$ such that $1-y_{v_{\ell}j} \leq 1/2$, or $y_{v_{\ell}j} \geq 1/2$. Constraint~(\ref{LP-2}) applied for $v_{\ell}$ and the path $(v_0, v_1, \hdots, v_\ell)$ gives $\sum_{i=0}^{\ell-1} x_{(v_i, v_{i+1})} \geq y_{v_{\ell}j} \geq 1/2$, and thus there exists an edge $(v_i, v_{i+1})\in P'$ with $x_{(v_i, v_{i+1})} \geq \frac{1}{2\ell}\geq \frac{1}{2k} \geq \frac{1}{4n^{2/3}}$, which means that $(v_i, v_{i+1})\in F_0$.
\end{proof}

\section{Conclusions and Future Work} 


Despite the fundamental nature of \mininfprob{} and \mininfprobnode{}, their computational complexity remained open for the $p<1$ setting. A number of heuristics have been proposed, and rigorous algorithms are only known for very special random graphs. We present the first rigorous approximation results for these problems for certain classes of instances; however, even these turn out to be quite challenging, and require adapting the cut sparsification and sample-average approximation techniques in a non-trivial manner.
Our work raises several interesting questions. First, it would be interesting to extend the result based on Karger's cut sparsification technique to the non-uniform probability setting. Second, it would be interesting to extend our work to other realistic random models of social-contact networks, and to also identify what reasonable assumptions on \emph{deterministic} network models would guarantee efficient solutions. Third, to capture a wider variety of realistic scenarios, it would be beneficial improving the constant upper bound on $p$ in Lemma \ref{CL-bound}. Finally, it is of interest to see if our approximation guarantees and running times can be improved.

\section*{Acknowledgements}

Michael Dinitz was supported by NSF award CCF-1909111. Aravind Srinivasan was supported in part by NSF awards CCF-1422569, CCF-1749864, and CCF-1918749, as well as research awards from Adobe, Amazon, and Google. Leonidas Tsepenekas was  supported in part by NSF awards CCF-1749864 and CCF-1918749, and by research awards from Amazon and Google. Anil Vullikanti's work was partially supported by NSF awards IIS-1931628, CCF-1918656, and IIS-1955797,  NIH award R01GM109718, and CDC U01CK000589.
The contents are those of the author and do not necessarily represent the official views of, nor an endorsement, by CDC/HHS, or the U.S. Government.

\printbibliography

@book{gls,
author = {M. Gr\"{o}tschel and L. Lov\'{a}sz and A. Schrijver},
title = {Geometric Algorithms and Combinatorial Optimization},
publisher = {Springer, Berlin, Heidelberg},
year = {1988}
}

@article{KleywegtSH01,
  title={The sample average approximation method for stochastic discrete optimization},
  author={Kleywegt, Anton J and Shapiro, Alexander and Homem-de-Mello, Tito},
  journal={SIAM Journal on Optimization},
  volume={12},
  number={2},
  pages={479--502},
  year={2002},
  publisher={SIAM}
}

@inproceedings{DBLP:conf/soda/EubankKMSW04,
  author    = {Stephen G. Eubank and
               V. S. Anil Kumar and
               Madhav V. Marathe and
               Aravind Srinivasan and
               Nan Wang},
  editor    = {J. Ian Munro},
  title     = {Structural and algorithmic aspects of massive social networks},
  booktitle = {Proceedings of the Fifteenth Annual {ACM-SIAM} Symposium on Discrete
               Algorithms, {SODA} 2004, New Orleans, Louisiana, USA, January 11-14,
               2004},
  pages     = {718--727},
  publisher = {{SIAM}},
  year      = {2004}
  }

@book{RuszczynskiS03,
  title={Stochastic programming},
  author={Ruszczynski, Andrzej P and Shapiro, Alexander},
  volume={10},
  year={2003},
  publisher={Elsevier, Amsterdam}
}

@article{chernoff,
        author = "H. Chernoff",
	title = "A measure of asymptotic efficiency for tests of a
hypothesis based on the sum of observations",
	journal = {Annals of {M}athematical {S}tatistics},
	year = "1952",
	volume = "23",
	pages = "493--509"
	}

@article{Shapiro03,
  title={Monte {C}arlo sampling methods},
  author={Shapiro, Alexander},
  journal={Handbooks in operations research and management science},
  volume={10},
  pages={353--425},
  year={2003},
  publisher={Elsevier, Amsterdam}
}

@article{swamy2012sampling,
  title={Sampling-based approximation algorithms for multistage stochastic optimization},
  author={Swamy, Chaitanya and Shmoys, David B},
  journal={SIAM Journal on Computing},
  volume={41},
  number={4},
  pages={975--1004},
  year={2012},
  publisher={SIAM}
}

@article{DBLP:journals/siamdm/ChungL06,
  author    = {Fan R. K. Chung and
               Linyuan Lu},
  title     = {The Volume of the Giant Component of a Random Graph with Given Expected
               Degrees},
  journal   = {{SIAM} J. Discret. Math.},
  volume    = {20},
  number    = {2},
  pages     = {395--411},
  year      = {2006}
}

@Article{marathe:cacm13,
  Title                    = {Computational Epidemiology},
  Author                   = {M. Marathe and A. Vullikanti},
  Journal                  = {Communications of the ACM},
  Year                     = {2013},
  volume = {56},
  number = {7},
  pages = {88--96}
}

@article{pastor2015epidemic,
  title={Epidemic processes in complex networks},
  author={Pastor-Satorras, Romualdo and Castellano, Claudio and Van Mieghem, Piet and Vespignani, Alessandro},
  journal={Reviews of modern physics},
  volume={87},
  number={3},
  pages={925},
  year={2015},
  publisher={APS}
}

@Article{bollobas+errorattack04,
  Title                    = {Robustness and vulnerability of scale-free random graphs},
  Author                   = {B. Bollob\'{a}s and O. Riordan},
  Journal                  = {Internet Mathematics},
  Year                     = {2004}
}

@InCollection{ekmsw-2006,
  Title                    = {{Structure of Social Contact Networks and Their Impact on
 Epidemics}},
  Author                   = {S. Eubank and {V.~S.~Anil Kumar} and M. V. Marathe and A.
 Srinivasan and N. Wang},
  Booktitle                = {Discrete Methods in Epidemiology},
  Publisher                = {American Math. Soc., Providence, RI},
  Year                     = {2006},
  Pages                    = {179--200},
  Volume                   = {70}
}

@InProceedings{svitkina2004,
author="Svitkina, Zoya
and Tardos, {\'E}va",
editor="Jansen, Klaus
and Khanna, Sanjeev
and Rolim, Jos{\'e} D. P.
and Ron, Dana",
title="Min-Max Multiway Cut",
booktitle="Approximation, Randomization, and Combinatorial Optimization. Algorithms and Techniques",
year="2004",
publisher="Springer Berlin Heidelberg",
address="Berlin, Heidelberg",
pages="207--218"
}

@inproceedings{hayrapetyan2005,
author = {Hayrapetyan, Ara and Kempe, David and P\'{a}l, Martin and Svitkina, Zoya},
title = {Unbalanced Graph Cuts},
year = {2005},
isbn = {3540291180},
publisher = {Springer-Verlag},
address = {Berlin, Heidelberg},
booktitle = {Proceedings of the 13th Annual European Conference on Algorithms},
pages = {191–202},
numpages = {12},
location = {Palma de Mallorca, Spain},
series = {ESA'05}
}

@inproceedings{Miller2007EffectiveVS,
  title={Effective vaccination strategies for realistic social networks},
  author={Joel C. Miller and James Mac Hyman},
  year={2007},
  journal = {Physica A: Statistical Mechanics and its Applications}, 
vol = {386(2)},
pages = {780-785}
}

@article{PhysRevLett.91.247901,
  title = {Efficient Immunization Strategies for Computer Networks and Populations},
  author = {Cohen, Reuven and Havlin, Shlomo and Ben-Avraham, Daniel},
  journal = {Phys. Rev. Lett.},
  volume = {91},
  issue = {24},
  pages = {247901},
  numpages = {4},
  year = {2003},
  month = {Dec},
  publisher = {American Physical Society}
}

@article{EAMES200970,
title = "Epidemic prediction and control in weighted networks",
journal = "Epidemics",
volume = "1",
number = "1",
pages = "70 - 76",
year = "2009",
author = "Ken T.D. Eames and Jonathan M. Read and W. John Edmunds"
}

@article{YANG2019115,
title = "Efficient vaccination strategies for epidemic control using network information",
journal = "Epidemics",
volume = "27",
pages = "115 - 122",
year = "2019",
author = "Yingrui Yang and Ashley McKhann and Sixing Chen and Guy Harling and Jukka-Pekka Onnela",
keywords = "Vaccination, Sociocentric networks, Agent-based models",
}

@article {Barabasi509,
	author = {Barab{\'a}si, Albert-L{\'a}szl{\'o} and Albert, R{\'e}ka},
	title = {Emergence of Scaling in Random Networks},
	volume = {286},
	number = {5439},
	pages = {509--512},
	year = {1999},
	publisher = {American Association for the Advancement of Science},
	journal = {Science}
}

@inproceedings{sambaturu2020designing,
  title={Designing Effective and Practical Interventions to Contain Epidemics},
  author={Sambaturu, Prathyush and Adhikari, Bijaya and Prakash, B Aditya and Venkatramanan, Srinivasan and Vullikanti, Anil},
  booktitle={Proceedings of the 19th International Conference on Autonomous Agents and MultiAgent Systems},
  pages={1187--1195},
  year={2020}
}

@inproceedings{PreciadoVM13,
author = {Victor M. Preciado and Michael Zargham and Chinwendu Enyioha and Ali Jadbabaie and George J. Pappas},
title = {Optimal vaccine allocation to control epidemic outbreaks in arbitrary networks.},
booktitle = {IEEE Conference on Decision and Control},
publisher = {IEEE},
year = {2013}
}

@inproceedings{PreciadoVM13_2,
author = {Victor M. Preciado and Michael Zargham and David Sun},
title = {A convex framework to control spreading processes in directed networks.},
booktitle = {Annual Conference on Information Sciences and Systems (CISS)},
publisher = {IEEE},
year = {2014}
}

@inproceedings{PreciadoVM14,
author = {Victor M. Preciado and Michael Zargham and Chinwendu Enyioha and Ali Jadbabaie and George J. Pappas},
title = {Optimal Resource Allocation for Network Protection Against Spreading Processes.},
booktitle = {IEEE Transactions on Control of Network Systems},
publisher = {IEEE},
pages = {99 -- 108},
year = {2014}
}

@inproceedings{SahaSDM15,
                title = {Approximation Algorithms for Reducing the Spectral Radius to Control Epidemic Spread},
                author = {Sudip Saha and Abhijin Adiga and B. Aditya Prakash and Anil Vullikanti},
                booktitle = {SIAM SDM},
                year = {2015},
}

@Inbook{Ogura2017,
author="Ogura, Masaki
and Preciado, Victor M.",
editor="Masuda, Naoki
and Holme, Petter",
title="Optimal Containment of Epidemics in Temporal and Adaptive Networks",
bookTitle="Temporal Network Epidemiology",
year="2017",
publisher="Springer Singapore",
address="Singapore",
pages="241--266",
}

@article{ganesh+topology05,
  author = {A. Ganesh and L. Massoulie and D. Towsley},
  title = {The effect of network topology on the spread of epidemics},
  journal = {Proceedings of INFOCOM},
  year = {2005},
}

@article{karger:mathor99,
title = "Random Sampling in Cut, Flow, and Network Design Problems",
Author ="David R. Karger",
journal = "Mathematics of Operations Research", 
Vol = 24, 
No = 2,
year = 1999,
pages= "383-413"
}

@InProceedings{kempe:esa05,
  Title                    = {Unbalanced Graph Cuts},
  Author                   = {A. Hayrapetyan and D. Kempe and M. P{\'a}l and Z.
 Svitkina},
  Booktitle                = {ESA},
  Year                     = {2005},
  Pages                    = {191-202}
}

\appendix

\section{The \normalfont{\mininfprobnode{}} \textbf{Problem}}
\label{sec:vacc}

While the results based on Karger's technique (Section~\ref{sec:karger}) do not easily extend to the \mininfprobnode{} problem, our SAA based results do, and we explain this here.

We make a few small changes to the linear program LP (\ref{LP-1})-(\ref{LP-4}). At first, we use variable $x_v$ as the indicator for removing (i.e., vaccinating) vertex $v$. Furthermore, each $P \in \mathcal{P}(s, v, G_j)$ will now contain the vertices of the path and not the edges. Everything else remains the same, and thus we get the following linear program, denoted by $LP_{vacc}$:
\begin{align}
    \min &\frac{1}{N}\sum_j \sum_v (1-y_{vj}) \text{ such that} \label{LP-v1} \\
    &\sum_{v\in P} x_v \geq y_{vj}, ~~\forall j ~\forall P\in \mathcal{P}(s, v, G_j)  \label{LP-v2}\\
    &\sum_v c_v x_v \leq B \label{LP-v3}\\
    &x_v, y_{vj} \in [0,1], ~\text{ for all $j\in [N], v \in V$}\label{LP-44}
\end{align}

\noindent
\textbf{\mininfprobnode{} in the Chung-Lu model.}
Our rounding scheme now involves constructing a subset $F_0\subseteq V$, by picking each $v
\in V$ with probability 
\begin{align}
    x'_v=\min\Big{\{}\frac{(\gamma+5)x_v\log{n}}{\epsilon}, 1\Big{\}} \notag
\end{align}

It is easy to verify that Theorem~\ref{thm:main} and Corollary~\ref{SAA-final} hold in the case of vertex removal, by considering the quantity $inf(V,E \setminus \{(u, v)\in E: u\in F\mbox{ or }v\in F\},s)$ instead of $inf(V, E\setminus F, s)$. Corollary~\ref{CL-bound} is unchanged for the vertex removal case, as well. Putting these together, we have the following result for the \mininfprobnode{} problem.

\begin{corollary}\label{CL-vacc}
The solution $F_0$ picked by the rounding scheme above is an $(O(\log{n}), O(1))$-approximation for the \mininfprobnode{} problem for graphs drawn from the Chung-Lu model with power law weights, with parameter $\beta=2+c_1$ for some constant $c_1>1$.
\end{corollary}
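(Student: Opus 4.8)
The plan is to observe that Corollary~\ref{CL-vacc} is not a new theorem so much as a re-instantiation of the machinery of Section~\ref{sec:SAA}, where ``breaking an $s$--$v$ path by deleting an edge on it'' is replaced throughout by ``breaking it by deleting a vertex on it,'' and then to plug in the Chung--Lu path bound. Concretely, I would first establish that Theorem~\ref{thm:main} and the node analogue of Corollary~\ref{SAA-final} hold for $LP_{vacc}$ with the node rounding $x'_v = \min\{(\gamma+5)x_v\log n/\epsilon,\,1\}$, and then feed in Corollary~\ref{CL-bound} to discharge the hypothesis $\Gamma \leq \poly(n)$.

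First I would re-run the proof of Theorem~\ref{thm:main} for the node version. For the budget, linearity of expectation together with constraint (\ref{LP-v3}) gives $\mathbb{E}[c(F_0)] \leq \frac{(\gamma+5)\log n}{\epsilon}\sum_v c_v x_v \leq \frac{(\gamma+5)\log n}{\epsilon}B$, and the same tail estimate (Lemma~\ref{chernoff}) yields $c(F_0) = O(\frac{\gamma}{\epsilon}\log n)\,B$ with probability $1-O(n^{-\gamma})$ over Type~2 randomness. For the infection guarantee I would define the three events $\mathcal{A}_1,\mathcal{A}_2,\mathcal{A}_3$ exactly as before, except that $\mathcal{A}_1$ now asserts that every $P \in \mathcal{P}_{hit}$ contains a \emph{vertex} lying in $F_0$. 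The conditional bound $\mathbb{E}[h(G,F_0)\mid \mathcal{A}] \leq (1+O(\epsilon))\mathbb{E}[h(G,F^*)]$ transfers unchanged: the chain of inequalities in (\ref{aux-l1}) only uses that the vertices of $S(j)$ are disconnected from $s$ (still guaranteed by $\mathcal{A}_1$) together with $y_{vj}<\epsilon$ off $S(j)$, and $F^* = \argmin_F \mathbb{E}[h(G,F)]$ ranges over node sets, so it is precisely the optimal \mininfprobnode{} solution. The one genuinely new computation is $\Pr[\mathcal{A}_1]$: for $P\in\mathcal{P}_{hit}$ constraint (\ref{LP-v2}) and the definition of $S(j)$ still force $\sum_{v\in P}x'_v \geq (\gamma+5)\log n$, so the survival probability $\prod_{v\in P}(1-x'_v) \leq n^{-(\gamma+5)}$, and the Markov/union-bound over the random path count $\mathcal{B}$ is identical. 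Hence $\Pr[\mathcal{A}] \geq 1-O(1/n^2)-O(\Gamma\log n/(\epsilon^2 n^\gamma))$, as required.

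Next I would invoke the node version of Corollary~\ref{SAA-final}: when $\Gamma\leq\poly(n)$, choosing $\gamma$ large enough makes $O(\Gamma/(\epsilon^2 n^\gamma)) = O(1/n^2)$, and splitting $\mathbb{E}[inf(\cdots,s)]$ over $\mathcal{A}$ and $\bar{\mathcal{A}}$ with the trivial bound $inf(\cdots,s)\leq n$ on the bad event yields $\mathbb{E}[inf(\cdots,s)] \leq (1+O(\epsilon)+O(1/n))\,OPT$. The final ingredient is Corollary~\ref{CL-bound}, whose proof bounds $\Gamma$ purely through the structural path count $\ell_k$ of the underlying Chung--Lu graph and is therefore insensitive to whether the intervention removes edges or vertices: for $\beta=2+c_1$ with $c_1>1$ and $p_e\leq c_0$, it gives exactly $\Gamma\leq\poly(n)$. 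Combining the two halves, with $\epsilon$ and $\gamma$ held to constants, the budget is violated by a factor $O(\log n)$ (w.h.p.\ over Type~2) and the expected number of infections is within a factor $1+O(\epsilon)+O(1/n) = O(1)$ of optimal, i.e.\ an $(O(\log n), O(1))$-approximation.

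The step I expect to require the most care is re-establishing the concentration Lemma~\ref{lemma:conc} for node sets: the union bound now ranges over the $2^n$ candidate vertex sets $F$ rather than the $2^m$ edge sets, and the per-set Chernoff failure $\frac{1}{n^2 2^m}$ only beats $2^m$ terms. To absorb this I would simply enlarge the sample size to $N = \frac{3n}{\epsilon^2}\log(n^2\cdot 2^{m+n+1})$, which is still $\poly(n,m)$ and drives the per-set failure below $\frac{1}{n^2 2^n}$, so the union bound again gives overall failure probability at most $1/n^2$. Every remaining part of the argument is a mechanical transcription of the edge case, with vertices on $s$--$v$ paths playing the role previously played by edges, so no further obstacles arise.
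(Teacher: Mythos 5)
Your proposal is correct and follows essentially the same route as the paper, which itself only asserts that Theorem~\ref{thm:main}, Corollary~\ref{SAA-final}, and Corollary~\ref{CL-bound} transfer verbatim to vertex removal and then combines them; you simply carry out that transcription explicitly. Your extra care about the concentration union bound now ranging over $2^n$ vertex subsets (patched by enlarging $N$ to $\frac{3n}{\epsilon^2}\log(n^2\cdot 2^{m+n+1})$, or noting $m\geq n-1$ in a connected graph) addresses a detail the paper silently glosses over, and is a valid, harmless fix.
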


\noindent
\textbf{\mininfprobnode{} in general graphs.} The deterministic rounding of Section \ref{sec:saa-general} holds if the rounding for edges is replaced by the same rounding for nodes, which gives us the following result.

\begin{corollary}
There is an $(O(n^{2/3}), O(n^{2/3}))$--approximation for the \mininfprobnode{} problem.
\end{corollary}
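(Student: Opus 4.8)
The plan is to transcribe the deterministic-rounding argument of Section~\ref{sec:saa-general} almost verbatim for \mininfprobnode{}, replacing the edge LP by $LP_{vacc}$ (constraints (\ref{LP-v1})--(\ref{LP-44})) and the edge threshold by the node threshold $F_0 = \{v \in V : x_v \geq \frac{1}{4n^{2/3}}\}$. First I would sample $N = \frac{3n}{\epsilon^2}\log(n^2 \cdot 2^{m+1})$ graphs $G_j$ from $G(\vec p)$, solve $LP_{vacc}$ to obtain a fractional optimum $(x,y)$, and form $F_0$. The budget bound is immediate: $\sum_{v \in F_0} c_v \leq 4n^{2/3}\sum_{v \in F_0} c_v x_v \leq 4n^{2/3}B$ by constraint (\ref{LP-v3}), giving the $O(n^{2/3})$ budget factor.

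For the infection bound I would first record the node analogue of Lemma~\ref{lemma:conc}, whose proof is unchanged except that the final union bound ranges over the $2^n$ node subsets $F \subseteq V$ rather than the $2^m$ edge subsets; since the per-set failure probability is at most $\frac{1}{n^2 2^m} \leq \frac{1}{n^2 2^n}$, the same $N$ still yields, with probability $1-O(1/n^2)$, that $h(G, F^*) \leq (1+\epsilon)\mathbb{E}[h(G, F^*)]$ and $h(G, F_0) \geq (1-\epsilon)\mathbb{E}[h(G, F_0)]$ for the node-removal quantities. Conditioned on this event it suffices to prove $h(G, F_0) \leq 2n^{2/3} h(G, \hat F)$, after which the identical chain of inequalities from Section~\ref{sec:saa-general} delivers $\mathbb{E}[h(G, F_0)] \leq \frac{2(1+\epsilon)}{1-\epsilon}\, n^{2/3}\, \mathbb{E}[h(G, F^*)]$, i.e.\ the $O(n^{2/3})$ approximation factor.

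The inequality $h(G, F_0) \leq 2n^{2/3} h(G, \hat F)$ is established through the same two-case split on whether $h(G, \hat F) > n^{1/3}$ (trivial, since $h(G_j, F_0) \leq n$) or $h(G, \hat F) \leq n^{1/3}$ (partition $[N]$ into $A'$ and $A''$ with $|A''| \leq N/n^{1/3}$, and bound the contribution of each $j \in A'$ by $\sum_v 2(1 - y_{vj})$). The only place where the node setting genuinely differs is the path-hitting lemma: for $j \in A'$ and a vertex $v$ with $y_{vj} \geq 1/2$, I must show every $P \in \mathcal{P}(s, v, G_j)$ contains a vertex of $F_0$. Here constraint (\ref{LP-v2}), now a sum of $x_{v'}$ over the \emph{vertices} of $P$, plays the role of (\ref{LP-2}): for a prefix of length at most $2n^{2/3}$ the bound $\sum_{v' \in P} x_{v'} \geq 1/2$ forces some $x_{v'} \geq \frac{1}{4n^{2/3}}$, and for a longer path I pass to the length-$2n^{2/3}$ prefix exactly as in Section~\ref{sec:saa-general}. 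Because deleting a vertex $v' \in F_0$ removes all edges incident to $v'$, such a vertex lying on $P$ indeed disconnects $P$, which is precisely the property the argument needs.

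I expect the main obstacle to be the boundary handling inside this node-hitting step: the path vertex carrying the large LP weight could in principle be the source $s$, which cannot be vaccinated. The clean fix is to hard-wire $x_s = 0$ (and $y_{sj} = 0$, since $s$ is always reachable from itself) in $LP_{vacc}$; the remaining vertices of $P$ then still carry LP weight at least $1/2$, so the hitting vertex is always a legitimately removable node whose deletion breaks the $s$--$v$ path. With this caveat dispatched, the rest of the argument is a routine transcription of Section~\ref{sec:saa-general}, and the two bounds together give the claimed $(O(n^{2/3}), O(n^{2/3}))$-approximation.
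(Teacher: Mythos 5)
Your proposal is correct and takes essentially the same route as the paper, which proves this corollary in a single sentence by observing that the deterministic rounding of Section~\ref{sec:saa-general} carries over verbatim once the edge LP is replaced by $LP_{vacc}$ and the threshold rounding is applied to the node variables $x_v$ --- exactly the transcription you carry out (your handling of the source $s$ is harmless but unnecessary, since under the paper's formalism removing $s$ is permitted and simply isolates it, yielding $inf=1$). One pedantic repair: your union bound over the $2^n$ node subsets invokes $\frac{1}{n^2 2^m}\leq \frac{1}{n^2 2^n}$, which holds only when $m\geq n$; when $m<n$ just take $N=\frac{3n}{\epsilon^2}\log\big(n^2\cdot 2^{\max(m,n)+1}\big)$, which is still $\poly(n,m)$ and restores the $1-O(1/n^2)$ success probability.
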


\section{Auxiliary Lemmas}

\begin{lemma}\label{chernoff}
\cite{chernoff} Let $X_1, X_2, \hdots, X_K$ be independent random variables with $X_k \in [0,1]$ for every $k$. For $X = \sum^K_{k=1}X_k$ with $\mu = \mathbb{E}[X]$:
\begin{itemize}
    \item For any $\delta > 0$, we have $\Pr\big{[}X \notin [(1-\delta)\mu, (1+\delta)\mu]\big{]} \leq e^{\frac{-\mu \delta^2}{3}}$.
    \item For any $R \geq 6 \mu$, we have $\Pr[X \geq R] \leq 2^{-R}$.
\end{itemize}
\end{lemma}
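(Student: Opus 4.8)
The plan is to derive both bounds from the exponential-moment (Chernoff--Cram\'er) method, the standard route for tail bounds on sums of independent bounded variables. First I would record the two generic inequalities: for the upper tail and any $t>0$, applying Markov's inequality to $e^{tX}$ gives $\Pr[X\geq a]\leq e^{-ta}\,\mathbb{E}[e^{tX}]=e^{-ta}\prod_{k=1}^{K}\mathbb{E}[e^{tX_k}]$, where independence yields the factorization; applying the same idea to $e^{-tX}$ gives the lower-tail version $\Pr[X\leq a]\leq e^{ta}\prod_{k=1}^{K}\mathbb{E}[e^{-tX_k}]$.

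The crux is to dominate each moment-generating factor uniformly in a way valid for arbitrary $[0,1]$-valued summands (not just Bernoulli ones), and this is the step I expect to require the most care. Because $s\mapsto e^{sx}$ is convex, the chord bound $e^{sx}\leq 1+(e^{s}-1)x$ holds for every $x\in[0,1]$; taking expectations and using $1+z\leq e^{z}$ gives $\mathbb{E}[e^{sX_k}]\leq 1+(e^{s}-1)\mathbb{E}[X_k]\leq\exp\!\big((e^{s}-1)\mathbb{E}[X_k]\big)$ for every real $s$. Multiplying over $k$ and collapsing $\sum_k\mathbb{E}[X_k]=\mu$ produces the single clean estimate $\prod_{k}\mathbb{E}[e^{sX_k}]\leq\exp\!\big((e^{s}-1)\mu\big)$, which feeds both bullets.

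For the first bullet I would take $a=(1+\delta)\mu$ with the optimizing choice $t=\ln(1+\delta)$ to get $\Pr[X\geq(1+\delta)\mu]\leq\big(e^{\delta}/(1+\delta)^{1+\delta}\big)^{\mu}$, and symmetrically $t=-\ln(1-\delta)$ to get $\Pr[X\leq(1-\delta)\mu]\leq\big(e^{-\delta}/(1-\delta)^{1-\delta}\big)^{\mu}$. What remains are the two elementary calculus inequalities $(1+\delta)\ln(1+\delta)-\delta\geq\delta^{2}/3$ and $(1-\delta)\ln(1-\delta)+\delta\geq\delta^{2}/2$, valid for $\delta\in(0,1)$, which convert these into $e^{-\mu\delta^{2}/3}$ and $e^{-\mu\delta^{2}/2}$ respectively; since the lower tail is dominated by the upper one (and is vacuous once $\delta\geq 1$, as $X\geq 0$), combining them gives the claimed two-sided bound of the form $e^{-\mu\delta^{2}/3}$ in the regime $\delta\in(0,1)$ that the paper actually invokes.

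For the second bullet I would reuse the upper-tail estimate in absolute form: choosing $a=R$ and $t=\ln(R/\mu)$ (legitimate since $R\geq 6\mu>\mu$) and simplifying yields $\Pr[X\geq R]\leq(e\mu/R)^{R}$. The hypothesis $R\geq 6\mu$ then gives $e\mu/R\leq e/6<1/2$, hence $\Pr[X\geq R]\leq(e/6)^{R}\leq 2^{-R}$, which is exactly the stated claim; the only non-mechanical input here is the numerical fact $e/6<1/2$.
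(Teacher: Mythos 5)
The paper offers no proof of this lemma at all --- it is quoted from \cite{chernoff} as a known fact --- so your derivation cannot be compared against an in-paper argument; what you give is the standard Chernoff--Cram\'er derivation (Markov on $e^{\pm tX}$, the chord bound $e^{sx}\leq 1+(e^{s}-1)x$ for $x\in[0,1]$ to handle general bounded summands, then the optimal choices $t=\ln(1+\delta)$, $t=-\ln(1-\delta)$, $t=\ln(R/\mu)$), and every step of it is sound. Two caveats are worth recording, both of which are really discrepancies between the lemma as transcribed and what any proof can deliver, rather than defects in your argument. First, the union bound over the two tails yields $2e^{-\mu\delta^{2}/3}$, not $e^{-\mu\delta^{2}/3}$; your hedge ``of the form'' papers over a genuine factor of $2$ that the lemma statement drops --- note the paper itself reinstates it when applying the bound in Lemma~\ref{lemma:conc}, where the right-hand side is $2e^{-\epsilon^{2}\mathbb{E}[X]/3}$. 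Second, your restriction to $\delta\in(0,1)$ is not merely convenient but necessary: the calculus inequality $(1+\delta)\ln(1+\delta)-\delta\geq\delta^{2}/3$ fails for large $\delta$ (already at $\delta=2$ one has $3\ln 3-2\approx 1.296<4/3$), and indeed the upper-tail bound $e^{-\mu\delta^{2}/3}$ is false in general for large $\delta$ (a near-Poisson sum with $\mu=1$ has tail decay $e^{-\Theta(\delta\log\delta)}$, which eventually exceeds $e^{-\delta^{2}/3}$), so the lemma's ``for any $\delta>0$'' overstates the cited result. Since the paper only ever invokes the bound with $\epsilon\in(0,1)$ and with the factor $2$, your proof fully supports every use the paper makes of Lemma~\ref{chernoff}; it would be worth stating these two corrections explicitly rather than implicitly.
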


\end{document}